\def\x{{\mathbf x}}
\def\Dk{{\bm{\Delta}^{(k+1)}}}
\def\y{{\bm{y}}}
\def\x{{\bm{x}}}
\def\betab{{\bm{\beta}}}
\def\z{{\bm{z}}}
\def\Z{{\bm{Z}}}
\def\U{{\bm{U}}}
\def\R{{\mathbb{R}}}
\def\D{{\bm{D}}}
\def\B{{\bm{B}}}
\def\betastar{{\betab^{\star}}}
\def\betahat{{\widehat{\betab}}}
\def\Bhat{\bm{\widehat{B}}}
\def\Bstar{\B^\star}
\newcommand{\rlam}[1]{\rho({#1})}
\newcommand{\glam}[1]{g({#1})}
\newcommand{\hlam}[1]{h({#1})}
\DeclareMathOperator{\argmin}{argmin}
\def\Ddag{{\bm{D}^{\dagger}}}
\def\Dkdag{{\bm{\Delta}^{(k+1)\dagger}}}
\newcommand{\Adj}{\bm{A}}
\def\V{\mathcal{V}}
\def\E{\mathcal{E}}
\newtheorem{defn}{Definition}
\newtheorem{remark}{Remark}
\newtheorem{assumptions}{Assumption}
\newtheorem{myThm}{Theorem}
\newtheorem{myProposition}{Proposition}
\algnewcommand{\Inputs}[1]{%
  \State \textbf{Inputs:}
  \Statex \hspace*{\algorithmicindent}\parbox[t]{.8\linewidth}{\raggedright #1}
}
\algnewcommand{\Initialize}[1]{%
  \State \textbf{Initialize:}
  \Statex \hspace*{\algorithmicindent}\parbox[t]{.8\linewidth}{\raggedright #1}
}
\title{Vector-Valued Graph Trend Filtering with Non-Convex Penalties}
\author{Rohan Varma, \IEEEmembership{Student Member}, Harlin Lee, \IEEEmembership{Student Member},  Jelena~Kova\v{c}evi\'c, \IEEEmembership{Fellow}, and\\ Yuejie Chi, \IEEEmembership{Senior Member}
\thanks{R. Varma, H. Lee and Y. Chi are with the Dept. of Electrical and Computer Engineering, Carnegie Mellon University. Emails:\{rohanv, harlinl, yuejiec\}@andrew.cmu.edu.}
\thanks{J. Kova\v{c}evi\'c is with the Tandon School of Engineering, New York University. Email: jelenak@nyu.edu. }%
\thanks{This work is supported in part by NSF under grants CCF-1563918, CCF-1826519, CCF-1806154 and ECCS-1818571, by ONR under grant N00014-18-1-2142, by ARO
under grant W911NF-18-1-0303, and by NIH under grant R01EB025018. A preliminary version of partial results in this paper was presented at the 2019 IEEE International Conference on Acoustics, Speech and Signal Processing \cite{varma2019improving}.}%
\thanks{The first two authors contributed equally.}}
\begin{document}

\maketitle

\begin{abstract}
    This work studies the denoising of piecewise smooth graph signals that exhibit inhomogeneous levels of smoothness over a graph, where the value at each node can be vector-valued. We extend the graph trend filtering framework to denoising vector-valued graph signals with a family of non-convex regularizers, which exhibit superior recovery performance over existing convex regularizers. Using an oracle inequality, we establish the statistical error rates of first-order stationary points of the proposed non-convex method for generic graphs. Furthermore, we present an ADMM-based algorithm to solve the proposed method and establish its convergence. Numerical experiments are conducted on both synthetic and real-world data for denoising, support recovery, event detection, and semi-supervised classification.
\end{abstract}

\begin{IEEEkeywords}
graph signal processing, graph trend filtering, semi-supervised classification, non-convex optimization
\end{IEEEkeywords}

\section{Introduction}

Signal estimation from noisy observations is a classic problem in signal processing and has applications in signal inpainting, collaborative filtering, recommendation systems and other large-scale data completion problems. Since noise can have deleterious, cascading effects in many downstream tasks, being able to efficiently and accurately filter and reconstruct a signal is of significant importance. 

With the explosive growth of information and communication, signals are generated at an unprecedented rate from various sources, including social networks, citation networks, biological networks, and physical infrastructure~\cite{newman_networks_2018}. Unlike time series or images, these signals lie on complex, irregular structures, and require novel processing techniques, leading to the emerging field of {\em graph signal processing} \cite{sandryhaila_discrete_2013}\nocite{shuman_emerging_2013}-\cite{ortega_graph_2018}. This framework models the structure by a graph and generalizes concepts and tools from classical discrete signal processing to graph signal processing. The associated graph-structured data are referred to as \emph{graph signals}. 

In graph signal processing, a common assumption is that the graph signal is smooth with respect to the graph, that is, the signal coefficients do not vary much over local neighborhoods of the graph. However, this characterization is insufficient for many real-world signals that exhibit spatially inhomogeneous levels of smoothness over the graph. In social networks for example, within a given community or social circle, users' profiles tend to be homogeneous, while within a different social circle they will be of different, yet still have homogeneous values. Consequently, the signal is often characterized by large variations between regions and small variations within regions such that there are localized discontinuities and patterns in the signal. As a result, it is necessary to develop representations and algorithms to process and analyze such {\em piecewise smooth} graph signals. 
	
In this work, we study the denoising of the class of piecewise smooth graph signals (including but not limited to piecewise constant graph signals), which is complementary to the class of smooth graph signals that exhibit homogeneous levels of smoothness over the graph. The reconstruction of smooth graph signals has been well studied in previous work both within graph signal processing \cite{shuman_emerging_2013}\nocite{chen_signal_2015,ortega_graph_2018,chen2015discrete,chen2015signal,romero2016kernel}-\cite{elmoataz_nonlocal_2008} as well as in the context of Laplacian regularization \cite{belkin_regularization_2004,zhu_semi-supervised_2003}. 

The Graph Trend Filtering (GTF) framework~\cite{wang_trend_2016}, which applies total variation denoising to graph signals \cite{kim_ell_1_2009}, is a particularly flexible and attractive approach that regularizes discrete graph differences using the $\ell_1$ norm. Although the $\ell_1$ norm based regularization has many attractive properties~\cite{buhlmann_statistics_2011}, the resulting estimates are biased toward zero for large coefficients. To alleviate this bias effect, non-convex penalties such as the Smoothly Clipped Absolute Deviation (SCAD) penalty~\cite{fan_variable_2001} and the Minimax Concave Penalty (MCP)~\cite{zhang_nearly_2010} have been proposed as alternatives.
These penalties behave similarly to the $\ell_1$ norm when the signal coefficients are small, but tend to a constant when the signal coefficients are large. Notably, they possess the so-called \emph{oracle property}: in the asymptotics of large dimension, they perform as well as the case where we know in advance the support of the sparse vectors~\cite{loh_regularized_2013}\nocite{loh_statistical_2017, zhang_general_2012,breheny_coordinate_2011}-\cite{ma_concave_2017}. 

In this work, we strengthen the GTF framework in~\cite{wang_trend_2016} by considering a large family of possibly non-convex regularizers, including SCAD and MCP that exhibit superior reconstruction performance over $\ell_1$ minimization for the denoising of piecewise smooth graph signals. Furthermore, we extend the GTF framework to allow vector-valued signals, e.g. time series~\cite{chen2014semi}, on each node of the graph, which greatly broadens the applicability of GTF to applications in social networks \cite{hallac_network_2015}, gene networks, and semi-supervised classification \cite{ jung2016semi}\nocite{jung2018network}-\cite{tran2017network}.  
Through theoretical analyses and empirical performance, we demonstrate that the use of non-convex penalties improves the performance of GTF in terms of both reduced reconstruction error and improved support recovery, i.e. how accurately we can localize the discontinuities of the piecewise smooth signals. Our contributions can be summarized as follows:

\begin{itemize}
\item Theoretically, we derive the statistical error rates of the signal estimates, defined as first-order stationary points of the proposed GTF estimator. We derive the rates in terms of the noise level and the alignment of the ground truth signal with respect to the underlying graph, without making assumptions on the piecewise smoothness of the ground truth signal. The better the alignment, the more accurate the estimates. Importantly, the estimators do not need to be the global minima of the proposed non-convex problem, which are much milder requirements and important for the success of optimization. For denoising vector-valued signals, the GTF estimate is more accurate when each dimension of the signal shares similar patterns across the graph.
\item Algorithmically, we propose an ADMM-based algorithm that is guaranteed to converge to a critical point of the proposed GTF estimator.
\item Empirically, we demonstrate the performance improvements of the proposed GTF estimators with non-convex penalties on both synthetic and real data for signal estimation, support recovery, event detection, and semi-supervised classification.
\end{itemize}

The rest of this paper is organized as follows. Section~\ref{sec:related_works} reviews related works and their relationships to the current paper. In Section~\ref{sec:basics}, we provide some background and definitions on graph signal processing and GTF. Section~\ref{sec:gtf_nonconvex} presents the proposed GTF framework with non-convex penalties and vector-valued graph signals. Section~\ref{sec:theory} develops its performance guarantees, and Section~\ref{sec:admm} presents an efficient algorithm based on ADMM. Numerical performance of the proposed approach is examined on both synthetic and real-world data for denoising and semi-supervised classification in Section~\ref{sec:numerical}. Finally, we conclude in Section~\ref{sec:conclusions} and briefly discuss future work.

Throughout this paper, we use boldface letters $\bm{a}$ and $\bm{A}$ to represent vectors and matrices respectively. The transpose of $\bm{A}$ is denoted as $\bm{A}^{\top}$. The $\ell$-th row of a matrix $\bm{A}$ is denoted as $\bm{A}_{\ell \cdot}$, and the $j$-th column of a matrix $\bm{A}$ is denoted as $\bm{A}_{ \cdot j}$. The cardinality of a set $T$ is denoted as $|T|$. For any set $T \subseteq \{1, 2, ..., r\}$ and $\x \in \R^r$, we denote $(\x)_T \in \R^{|T|}$ such that $x_\ell \in (\x)_T$ if and only if $\ell \in T$ for $\ell \in \{1, 2, ..., r\}$. Similarly, we define a submatrix $\bm{A}_{T\cdot}\in\mathbb{R}^{|T|\times d}$ of $\bm{A}\in\mathbb{R}^{r \times d}$ that corresponds to pulling out the rows of $\bm{A}$ indexed by $T$. The $\ell_2$ norm of a vector $\bm{a}$ is defined as $\|\bm{a}\|_2$, and the spectral norm of a matrix $\bm{A}$ is defined as $\|\bm{A}\|$. The pseudo-inverse of a matrix $\bm{A}$ is defined as $\bm{A}^{\dag}$. For a function $h(\x):\R^p \to \R$, we write $\nabla_\x h(\x)|_{\x = \x^{*}}$ to denote the gradient or subdifferential of $h(\x)$, if they exist, evaluated at $\x=\x^{*}$. When the intention is clear, this may be written concisely as $\nabla h(\x^{*})$. We also follow the standard asymptotic notations. If for some constants $C, N >0$, $|f(n)|\le C|g(n)|$ for all $n\ge N$, then $f(n)=O(g(n))$; if $g(n)=O(f(n))$, then $f(n)=\Omega(g(n))$. Finally, Table~\ref{table:parameters} summarizes some key notations used in this paper for convenience.

\begin{table}[h]
  \footnotesize
  \begin{center}
    \begin{tabular}{@{}lll@{}}
      \toprule
      {\bf Symbol}  & {\bf Description} & {\bf Dimension}\\
      \midrule \addlinespace[1mm]
      $ \bm{\Delta}$ &  oriented incidence matrix &  $m \times n$\\ 
  $ \Dk$ &  $k$th order graph difference operator &  $r \times n$\\ 
      $ \betab$  & scalar-valued graph signal &  $n $\\ 
      $\B $ & vector-valued graph signal &  $n \times d$\\ 
      $\y $ & noisy observation of $\betab$ &  $n$\\ 
      $ \bm{Y} $ &  noisy observation of $\B$ &  $n \times d$\\
      $ \bm{\Delta}_{\ell \cdot}$ &  $\ell$-th row of $\bm{\Delta}$ &  $n $\\
	  $\B_{\cdot j} $ & $j$-th column of $\B$ &  $n$\\ 
	  $\|\Dk\| $ & spectral norm of $\Dk$ &  $1$\\ 

      \bottomrule
    \end{tabular}
  \end{center}
  \caption{\label{table:parameters}
    Key notations used in this paper. }
\end{table}

\section{Related Work and Connections} \label{sec:related_works}

Estimators that adapt to spatial inhomogeneities have been well studied in the literature via regularized regression, total variation and splines \cite{mammen_locally_1997}\nocite{rudin_nonlinear_1992}-\cite{chan_digital_2001}. Most of these methods involve locating change points or knots that denote a distinct change in the behavior of the function or the signal. 

Our work is most related to the spatially adaptive GTF estimator introduced in~\cite{wang_trend_2016} that~\emph{smoothens} or~\emph{filters} noisy signals to promote piecewise smooth behavior with respect to the underlying graph structure; see also \cite{mahmood_adaptive_2018}. In the same spirit as~\cite{mammen_locally_1997}, the fused LASSO and univariate trend filtering framework developed in~\cite{kim_ell_1_2009,tibshirani_adaptive_2014,tibshirani_sparsity_2005} use discrete difference operators to fit a time series signal using piecewise polynomials. The GTF framework generalizes univariate trend filtering by generalizing a path graph to arbitrarily complex graphs. Specifically, by appropriately defining the discrete difference operator, we can enforce piecewise constant, piecewise linear, and more generally piecewise polynomial behaviors over the graph structure. In comparison to previous work~\cite{wang_trend_2016}, in this paper, we have significantly expanded its scope by allowing vector-valued data over the graph nodes and a broader family of possibly non-convex penalties. 

We note that while a significant portion of the relevant literature on GTF or the fused LASSO has focused on the sparsistency or support recovery conditions under which we can ensure the recovery of the location of the discontinuities or knots~\cite{sharpnack_sparsistency_2012,harchaoui_multiple_2010}, in this work, we study the asymptotic error rates of our estimator with respect to the mean squared error. Our analysis of error rates leverages techniques in \cite{hutter_optimal_2016, dalalyan_prediction_2017} that result in sharp error rates of total variation denoising via oracle inequalities, which we have carefully adapted to allow {\em non-convex} regularizers. The obtained error rates can be translated into bounds on support recovery or how well we can localize the boundary by leveraging techniques in \cite{lin_approximate_2016}. 

Employing a graph-based regularizer that promotes similarities between the signal values at connected nodes has been investigated by many communities, such as graph signal processing, machine learning, applied mathematics, and network science. The Network LASSO proposed in \cite{hallac_network_2015}, which is similar to the GTF framework with multi-dimensional or vector-valued data, focused on the development of efficient algorithms without any theoretical guarantees. The recent works by Jung et al.~\cite{jung2018network,tran2017network,jung2017network} have analyzed the performance of Network LASSO for semi-supervised learning when the graph signal is assumed to be {\em clustered} according to the labels using the network null space property and the network compatibility condition inspired by related concepts in compressed sensing \cite{van2009conditions}. In contrast, our analysis does not make assumptions on the graph signal, and the error rate is adaptive to the alignment of the signal and the graph structure used in denoising.

A well-studied generalization of the sparse linear inverse problem is when there are multiple measurement vectors (MMV), and the solutions are assumed to have a~\emph{common sparsity pattern}~\cite{cotter_sparse_2005}\nocite{ chen_theoretical_2006}-\cite{eldar2010block}. Sharing information across measurements, and thereby exploiting the conformity of the sparsity pattern, has been shown to significantly improve the performance of sparse recovery in compressive sensing and sparse coding  \cite{mairal_online_2010}\nocite{chen_hyperspectral_2011, li_off--grid_2016,eldar_robust_2009}-\cite{davies_rank_2012}. Motivated by these works, we consider vector-valued graph signals that are regarded as multiple measurements of scalar-valued graph signals sharing discontinuity patterns. 

There are a few variants of non-convex penalties that promote sparsity such as SCAD, MCP, weakly convex penalties, and $\ell_q$ ($0\leq q<1$) minimization \cite{loh_regularized_2013,loh_support_2017}\nocite{chen2014convergence,chartrand2008restricted}-\cite{ji2018learning}. In this paper, we consider and develop theory for a family of non-convex penalties parametrized similarly to that in \cite{loh_regularized_2013,chen2014convergence} with SCAD and MCP as our prime examples, although it is valid for other non-convex penalties.

\begin{figure*}[t]
    \centering
    \includegraphics[width=0.95\textwidth]{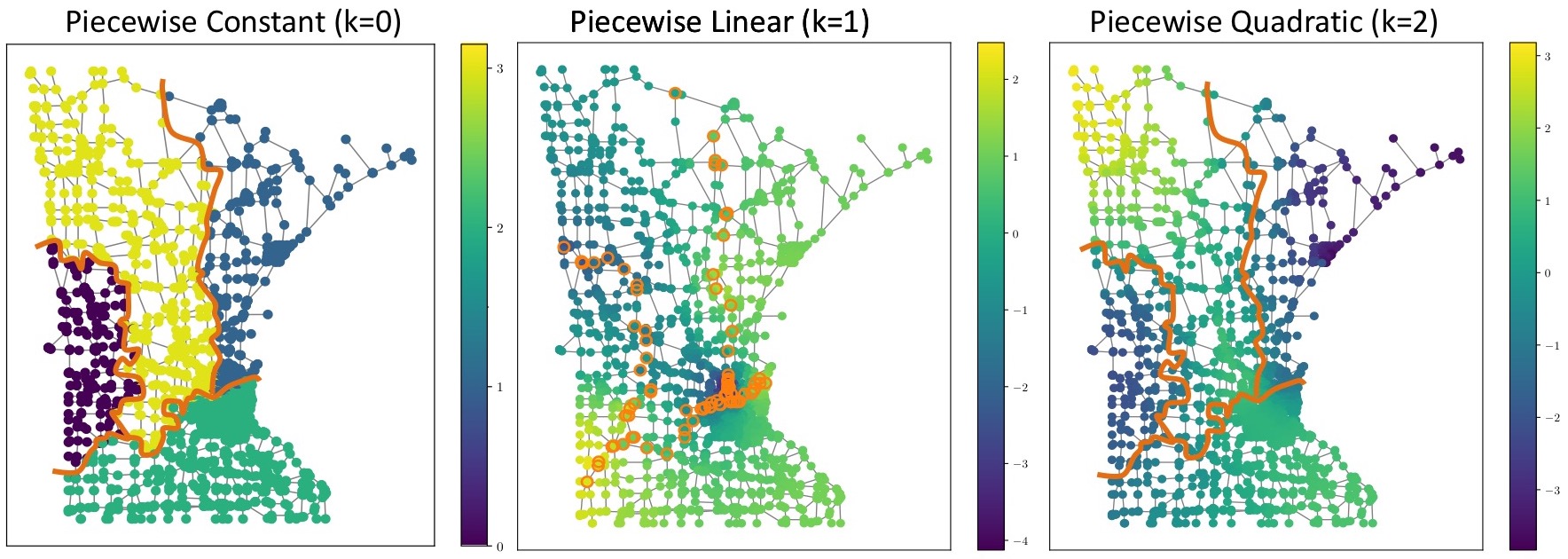}
    \caption{Illustration of piecewise smooth signals on the Minnesota road graph. From left to right: piecewise constant ($k=0$), piecewise linear ($k=1$), and piecewise quadratic ($k=2$) graph signals. Note that the highlighted change points, i.e. the support of $\Dk\betastar$, are edges for even $k$ and nodes for odd $k$.}
    \label{fig:nyc}
\end{figure*}

\section{Graph Signal Processing, Piecewise Smooth Signals, and Graph Trend Filtering} \label{sec:basics}

We consider an undirected graph $G = (\V, \E, \Adj)$, where $\V = \{v_1,\ldots, v_{n}\}$ is the set of nodes, $\E = \{e_1,\ldots, e_{m}\}$ is the set of edges, and $\Adj =[A_{jk}] \in \R^{n \times n}$ is the unweighted adjacency matrix -- also known as the~\emph{graph shift operator}~\cite{sandryhaila_discrete_2013}.  The edge set $\E$ represents the connections of the undirected graph $G$, and the positive edge weight $A_{jk}$ measures the underlying relation between the $j$th and the $k$th node, such as a similarity, a dependency, or a communication pattern.  
Let a scalar-valued~\emph{graph signal} be defined as
\begin{equation}
\label{eq:graph_signal}
\nonumber
  \betab \ = \ \begin{bmatrix}
 \beta_1, \beta_2, \ldots, \beta_{n}
\end{bmatrix}^{\top} \in \R^n,
\end{equation}
 where $\beta_i$ denotes the signal coefficient at the $i$th node. 
 
Let $\bm{\Delta} \in \R^{m \times n}$ be the oriented incidence matrix of $G$, where each row corresponds to an edge. That is, if the edge $e_i =(j,k) \in \mathcal{E}$ connects the $j$th node to the $k$th node ($j < k$), the entries in the $i$th row of $\bm{\Delta}$ is then given as
\begin{equation*}
\label{eq:Delta}
 \Delta_{i\ell} = 
  \left\{ 
    \begin{array}{rl}
      - 1, & \ell = j;\\
       1, & \ell = k;\\
      0, & \mbox{otherwise}
  \end{array} \right..
\end{equation*}

The entries of the signal $\bm{\Delta}\bm{\beta}=[(\beta_k -\beta_j)]_{(j,k)\in\mathcal{E}}$ specifies the unweighted pairwise differences of the graph signal over each edge. As a result, $\bm{\Delta}$ can be interpreted as a~\emph{graph difference operator}. In graph signal processing, a signal is called smooth over a graph $G$ if $\|\bm{\Delta}\bm{\beta}\|_2^2 = \sum_{(j,k) \in \mathcal{E}} ( \beta_k - \beta_j )^2$ is small.

\subsection{Piecewise Smooth Graph Signals}
In practice, the graph signal may not be necessarily smooth over the entire graph, but only locally within different pieces of the graph. To model inhomogeneous levels of smoothness over a graph, we say that a graph signal $\bm{\beta}$ is piecewise constant over a graph $G$ if many of the differences $\beta_k - \beta_j$ are zero for $(j,k)\in\mathcal{E}$. Consequently, the difference signal $\bm{\Delta}\bm{\beta}$ is sparse and $\| \bm{\Delta} \bm{\beta} \|_0$ is small.

We can characterize~\emph{piecewise $k$th order polynomial} signals on a graph, where the piecewise constant case corresponds to $k=0$, by generalizing the notion of graph difference operators. Specifically, we use the following recursive definition of the $k$th order graph difference operator $\Dk$ \cite{wang_trend_2016}. Let $\bm{\Delta}^{(1)} = \bm{\Delta}$ for $k=0$. For $k\geq 1$, let
\begin{equation} \label{eq:def_Dk}
	\bm{\Delta}^{(k+1)} = \begin{cases}
	\bm{\Delta}^{(1)\top} \bm{\Delta}^{(k)} \in \mathbb{R}^{n \times n} , & \text{ odd $k$ } \\ 
		\bm{\Delta}^{(1)} \bm{\Delta}^{(k)} \in \mathbb{R}^{m \times n} , & \text{ even $k$ } \\
	\end{cases}.
\end{equation}

The signal $\bm{\beta}$ is said to be a piecewise $k$th order polynomial graph signal if $\|\Dk\bm{\beta}\|_0$ is small. To further illustrate, let us consider the piecewise linear graph signal, corresponding to $k=1$, as a signal whose value at a node can be linearly interpolated from the weighted average of the values at neighboring nodes. It is easy to see that this is the same as requiring the second-order differences $\bm{\Delta}^{\top} \bm{\Delta} \bm{\beta}$ to be sparse. Similarly, we say that a signal has a piecewise quadratic structure over a graph if the differences between the second-order differences defined for piecewise linear signals are mostly zero, that is, if $\bm{\Delta} \bm{\Delta}^{\top} \bm{\Delta} \bm{\beta}$ is sparse. Fig.~\ref{fig:nyc} illustrates various orders of piecewise graph smooth signals over the Minnesota road network graph. 
 
\subsection{Denoising Piecewise Smooth Graph Signals via GTF} 
Assume we observe a noisy signal $\y$ over the graph under i.i.d Gaussian noise:
\begin{equation}
	\label{eq:model}
    \y = \betab^{\star} +\bm{\epsilon},\hspace{0.5cm} \bm{\epsilon} \sim \mathcal{N}(\bm{0}, \sigma^2 \bm{I}) ,
    \end{equation}
and seek to reconstruct $\betab^{\star}$ from $\y$ by leveraging the graph structure. When $\bm{\beta}$ is a smooth graph signal, Laplacian smoothing~\cite{belkin_regularization_2004,zhu_semi-supervised_2003,belkin_laplacian_2002}-\nocite{belkin_laplacian_2003}\cite{talukdar_new_2009} can be used, which solves the following problem: 
\begin{equation}
\label{eq:lapreg}
   \min_{\betab \in \R^n } \frac{1}{2}\|\y-\betab\|_2^2 + \lambda \| \bm{\Delta} \betab \|_2^2,
\end{equation}
where $\lambda>0$. However, it cannot localize abrupt changes in the graph signal when the signal is piecewise smooth.

Graph trend filtering (GTF)~\cite{wang_trend_2016} is a flexible framework for estimation on graphs that is adaptive to inhomogeneity in the level of smoothness of an
observed signal across nodes. The $k$th order GTF estimate is defined as:
\begin{equation} \label{eq:gtf_ell1}
    \min_{\betab \in \R^n } \frac{1}{2}\|\y-\betab\|_2^2 + \lambda \| \Dk \betab \|_1 ,
\end{equation}
which can be regarded as applying total variation or fused LASSO with the graph difference operator $\Dk$~\cite{kim_ell_1_2009,tibshirani_solution_2011}.  The sparsity-promoting properties of the $\ell_1$ norm have been well-studied~\cite{tibshirani_regression_1996}. Consequently, applying the $\ell_1$ penalty in GTF sets many of the (higher-order) graph differences to zero while keeping a small fraction of non-zero values. GTF is then~\emph{adaptive} over the graph; its estimate at a node adapts to the smoothness in its localized neighborhood. 

\section{Vector-Valued GTF with Non-convex Penalties} \label{sec:gtf_nonconvex}

In this section, we first extend GTF to allow a broader family of non-convex penalties, and then extend it to handle vector-valued signals over the graph.

\subsection{(Non-)convex Penalties}

The $\ell_1$ norm penalty considered in  \eqref{eq:gtf_ell1} is well-known to produce biased estimates \cite{zhang2008sparsity}, which motivates us to extend the GTF framework to a broader class of sparsity-promoting regularizers that are not necessarily convex. We wish to minimize the following generalized $k$th order GTF loss function:
\begin{equation}
\label{eq:gtf}
f(\betab) =\frac{1}{2}\|\y-\betab\|_2^2 + g( \Dk \betab ; \lambda, \gamma), \quad \betab \in \R^n ,
\end{equation}
where 
$$ g(\Dk \betab ) \triangleq g( \Dk \betab;\lambda, \gamma ) = \sum_{\ell = 1}^r \rho( (\Dk \betab)_{\ell} ;\lambda, \gamma)$$  is a regularizer defined as the sum of the penalty function $\rho(\cdot; \lambda, \gamma): \R \to \R$ applied element-wise to $\Dk \betab$. Here, $r=m$ for even $k$ and $r=n$ for odd $k$ to account for different dimensions of $\Dk$; see \eqref{eq:def_Dk}. We will refer to the GTF estimator that minimizes $f(\bm{\beta})$ as \textit{scalar-GTF}. 

Similarly to~\cite{loh_regularized_2013,zhang_general_2012,chen2014convergence}, we consider a family of penalty functions $\rho(\cdot; \lambda,\gamma)$ that satisfies the following assumptions.
\begin{assumptions}\label{assumptions_1}
Assume $\rho(\cdot; \lambda,\gamma)$ satisfies the following:
	\begin{enumerate}[label=(\alph*)]
		\item  $\rho(t ; \lambda, \gamma)$ satisfies $\rho(0 ; \lambda, \gamma) = 0$, is symmetric around $0$, and is non-decreasing on the real non-negative line.
		\item  For $t \ge 0$, the function $ t \mapsto \frac{\rho(t;\lambda, \gamma)}{t}$ is non-increasing in $t$. Also, $\rho(t ; \lambda, \gamma)$ is differentiable for all $t \neq 0$ and sub-differentiable at $t = 0$, with $\lim_{t\to0^{+}} \rho'(t; \lambda, \gamma)=\lambda$. This upper bounds $\rho(t ; \lambda,\gamma) \le \lambda | t| $. 
		\item There exists $\mu>0$ such that $ \rho(t; \lambda,\gamma) + \frac{\mu}{2}t^2$ is convex. 
	\end{enumerate}
\end{assumptions}

Many penalty functions satisfy these assumptions. Besides the $\ell_1$ penalty, the non-convex SCAD \cite{fan_variable_2001} penalty
\begin{align} \label{eq:scad}
\rho_{\mathrm{SCAD}} (t; \lambda, \gamma) &= \lambda \int_{0}^{|t|} \min \left(1,  \frac{ (\gamma - u/\lambda)_{+}}{\gamma-1}\right) du, \gamma\geq 2,
\end{align}
and the MCP \cite{zhang_nearly_2010}
\begin{align}\label{eq:mcp}	
\rho_{\mathrm{MCP}} (t; \lambda, \gamma) &= \lambda \int_{0}^{|t|} \left(1 - \frac{u}{\lambda \gamma}\right)_{+} du, \quad \gamma\geq 1 
\end{align}
also satisfy them. We note that Assumption~\ref{assumptions_1} (c) is satisfied for SCAD with $\mu \geq \mu_{\mathrm{SCAD}} = \frac{1}{\gamma - 1}$ and for MCP with $\mu \geq \mu_{\mathrm{MCP}}  = \frac{1}{\gamma}$. Fig.~\ref{fig:penfuns} illustrates the $\ell_1$, SCAD and MCP penalties for comparison. While the non-convexity means that in general, we may not always find the global optimum of $f(\betab)$, it often affords us many other advantages.
 SCAD and MCP both taper off to a constant value, and hence apply less shrinkage for higher values. As a result, they mitigate the bias effect while promoting sparsity. Further, they are smooth and differentiable for $t \ge 0$ and are both upper bounded by the $\ell_1$ penalty for all $t$. 
\begin{figure}[t]
    \centering
    \includegraphics[width=0.3\textwidth]{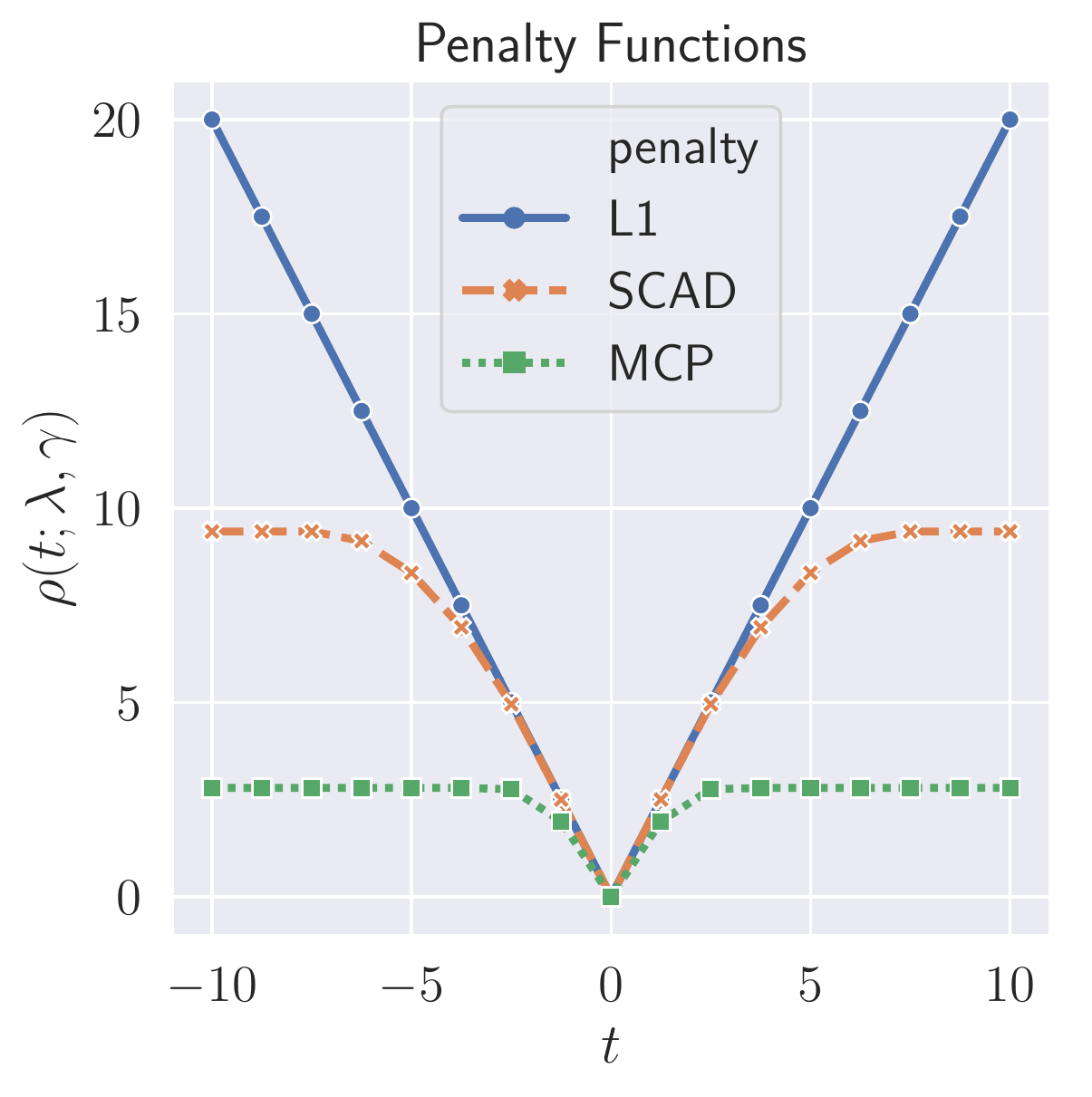}
    \caption{Illustration of $\rho(\cdot; \lambda, \gamma)$ for $\ell_1$, SCAD ($\gamma = 3.7$), and MCP ($\gamma = 1.4$), where $\lambda = 2$. Both SCAD and MCP move towards $\ell_1$ as $\gamma$ increases.}
    \label{fig:penfuns}
\end{figure}

\subsection{Vector-Valued GTF}
In many applications, the signals on each node are in fact multi-dimensional or~\emph{vector-valued}, e.g. time series in social networks, multi-class labels in semi-supervised learning, feature vectors of different objects in feature selection. Therefore, it is natural to consider an extension to the graph signal denoising problem, where the graph signal on each node is a $d$-dimensional vector instead of a scalar. In this scenario, we define a vector-valued graph signal to be piecewise smooth if it is piecewise smooth in each of its $d$ dimensions, and assume their discontinuities to coincide over the same small set of edges or nodes. Further, we denote the vector-valued signal of interest as $\B^{\star} \in \R^{n \times d}$, such that the $i$th row of the matrix $\B$ corresponds to the $i$th node of the graph. The noise model for the observation matrix $\bm{Y} \in \R^{n \times d}$ is defined as 
\begin{equation}
    \bm{Y} = \B^{\star} + \bm{E},
    \label{eq:modelvec}
\end{equation}
where each element of $\bm{E} \in \R^{n \times d}$ is drawn i.i.d from $\mathcal{N}(0, \sigma^2)$. A na\"{i}ve approach is to estimate each column $ \B_{\cdot j}$ of $\B$ separately via scalar-GTF:
\begin{equation}\label{eq:sep_gtf}
\min_{\B\in \R^{n \times d}}  \sum_{j=1}^d f(\B_{\cdot j}).
\end{equation}
 
However, this formulation does not take full advantage of the multi-dimensionality of the graph signal. Instead, when the columns of $\B$ are correlated, coupling them can be beneficial such that we encourage the sharing of information across dimensions or features. For example, if one column $\B_{\cdot i}$ exhibits strong piecewise smoothness over the graph, and therefore has compelling evidence about the relationship between nodes, sharing that information to a related column $\B_{\cdot j}$ can improve the overall denoising and filtering performance. As a result, we formulate a \textit{vector-GTF} problem as follows:
\begin{equation} \label{eq:gtfvec}
    \min_{\B \in \R^{n\times d} } \frac{1}{2}\|\bm{Y}-\B \|_{\mathrm{F}}^2 + h( \Dk \B; \lambda, \gamma),
\end{equation}
where the new penalty function $h(\Dk\B)\triangleq  h( \Dk\B ; \lambda, \gamma) : \R^{r \times d} \to \R$ is the sum of $\rho(\cdot; \lambda,\gamma)$ applied to the $\ell_2$ norm of each row of $\Dk\B \in \R^{r \times d}$:
 \begin{equation} \label{eq:gtfvecpen}
    h( \Dk\B \ ; \lambda, \gamma) = \sum_{\ell=1}^r\rho\left ( \| (\bm{\Delta}^{(k+1)}\B)_{\ell \cdot}  \|_2 ; \lambda, \gamma \right ).
 \end{equation}
By enforcing sparsity on $\left\{\| (\bm{\Delta}^{(k+1)}\B)_{\ell \cdot} \|_2 \right\}_{1 \leq l \leq r}$, we are coupling $\bm{\Delta}^{(k+1)} {\B}_{\cdot j}$ to be of similar sparsity patterns across $j=1,\ldots,d$. Note the difference from \eqref{eq:sep_gtf}, where elements of $(\bm{\Delta}^{(k+1)}\B)_{\ell \cdot}$ can be set to zero or non-zero independently.

\section{Theoretical Guarantees}
\label{sec:theory}

In this section, we present the error rates and support recovery guarantees of the generalized GTF estimators, namely  scalar-GTF~\eqref{eq:gtf} and vector-GTF~\eqref{eq:gtfvec}, under the AWGN noise model. Before continuing, we first define a few useful quantities. Let $C_{G}$ be the number of connected components in the graph $G$, or equivalently, the dimension of the null space of $\Dk$. Further, let $r$ be the number of rows of $\Dk$, and  $\zeta_k$ be the maximum $\ell_2$ norm of the columns of $\Dkdag$.

\subsection{Error Rates of First-order Stationary Points}

Due to non-convexity, global minima of the proposed GTF estimators may not be attainable. Therefore, it is more desirable to understand the statistical performance of any first-order stationary points of the GTF estimators. We call $\betahat \in \R^n$ a stationary point of $f(\betab)$, if it satisfies
\begin{equation}
    0 \in \nabla_\betab f(\betab)|_{\betab = \betahat} .\nonumber
\end{equation}

We further introduce the compatibility factor, which  generalizes the notion used in \cite{hutter_optimal_2016} to allow vector-valued  signals.
\begin{defn}[Compatibility factor]
Let $\Dk$ be fixed. The compatibility factor $\kappa_{T,d}$ of a set $T\subseteq \{1,2,\ldots, r\}$ is defined as $\kappa_{\varnothing, d}=1$, and for nonempty set $T$,      
\begin{equation}  \label{def:kappa}
\kappa_{T,d}(\Dk) = \inf_{\B \in \R^{n\times d}}\Bigg \{\frac{\sqrt{|T|} \cdot \|\B\|_{\mathrm{F}}}{\sum_{\ell\in T}\|(\Dk\B)_{\ell \cdot}\|_2} \Bigg \}. \nonumber
\end{equation}
\end{defn}

To further build intuition, consider $\sqrt{|T|}\kappa_{T,1}(\bm{\Delta})^{-1}=\sup_{\betab \in \R}\{\|(\bm{\Delta})_{T}\betab\|_1/\|\betab\|_2\}$. This is precisely the definition of $\|(\bm{\Delta})_{T}\|_{1,2}$, an induced norm of the $|T|\times n$ submatrix of $\bm{\Delta}$. If we consider signals with fixed power $\|\betab\|_2^2=1$, $\|(\bm{\Delta})_{T}\betab\|_1$ will depend on how much the $T$ edges are connected to each other. Together with $\|(\bm{\Delta})_{T}\betab\|_1\le \sqrt{|T|}\|(\bm{\Delta})_{T}\betab\|_2$, $\kappa_{T,1}(\bm{\Delta})$ can be related to the restricted eigenvalue condition, which is often used to bound the performance of LASSO \cite{van2009conditions}. With slight abuse of notation, we write $\kappa_{T}: = \kappa_{T,d}$. 

We have the following oracle inequality that is applicable to the stationary points of the GTF estimators. The proof in Appendix A follows a construction that is similar to Theorem 2 in~\cite{hutter_optimal_2016}. The oracle inequality holds for any $\betahat$ that satisfies the first-order optimality condition, allowing the use of non-convex penalties. This mild condition on $\betahat$ is a key difference from~\cite[Theorem 3]{wang_trend_2016} and~\cite[Theorem 1]{varma2019improving} that are applicable to the global minimizer, which is difficult to guarantee when using non-convex penalties. We also stress that although GTF was motivated by piecewise smooth graph signals, Theorem~\ref{thm:oracle} holds for any graph $G$ and graph signal $\betastar$. 
\begin{myThm}[Oracle inequality of GTF stationary points]\label{thm:error_oracle}
Assume $\mu < 1/\|\Dk\|^2$. Fix $\delta\in (0, 1)$. For scalar-GTF \eqref{eq:gtf}, let $\betahat$ be a stationary point. 
Set $\lambda = \sigma \zeta_k \sqrt{2 \log \left(\frac{er}{\delta}\right)} $,  then 
\begin{align}
&	\frac{ \| \betahat-\betastar  \|_2^2}{n}  \leq \inf_{\betab \in \R^{n}} \Bigg\{ \frac{\| \betab -\betastar \|_2^2 + 4 \glam {(\Dk \betab)_{T^c}}}{n}  \Bigg\}\nonumber \\
	&+ \frac{2\sigma^2\left[  C_G + 2 \sqrt{2 C_G \log(\frac{1}{\delta})}  + \frac{8 \zeta_k^2 |T|}{\kappa_{T}^2} \log (\frac{er}{\delta}) \right]}{n (1 -  \mu  \|\Dk\|^2 )}  \label{eq:scalar_oracle}
\end{align}
with probability at least $1-2\delta$ for any $T \subseteq \{1, 2, ..., r\}$. Similarly, for vector-GTF \eqref{eq:gtfvec}, let $\Bhat$ be a stationary point. Set $\lambda = \sigma \zeta_k \sqrt{2d \log (\frac{edr}{\delta})}$, then
\begin{align}
&	\frac{\| \Bhat - \B^\star \|_\mathrm{F}^2}{dn} \leq \inf_{\B \in \R^{n\times d}} \Bigg \{ \frac{ \| \B -\B^\star \|_\mathrm{F}^2 +4\hlam { (\Dk \B)_{T^c}} }{dn }\Bigg \} \nonumber \\
	&+ \frac{2\sigma^2\left[   C_G + 2 \sqrt{2 C_G \log(\frac{d}{\delta})}  + \frac{8 \zeta_k^2 |T|}{\kappa_{T}^2} \log (\frac{edr}{\delta}) \right]}{ n(1 -  \mu  \| \Dk \|^2)}  \label{eq:vector_oracle}
\end{align}
with probability at least $1-2\delta$ for any $T \subseteq \{1, 2, ..., r\}$.
\label{thm:oracle}
\end{myThm}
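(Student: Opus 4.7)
My plan is to adapt the oracle-inequality proof of Hutter and Rigollet's Theorem~2 to accommodate the non-convex regularizer and vector-valued signals considered here. The key starting observation is that under $\mu<1/\|\Dk\|^2$ and Assumption~\ref{assumptions_1}(c), $f$ is globally $(1-\mu\|\Dk\|^2)$-strongly convex in $\betab$: the quadratic loss is $1$-strongly convex, and $\glam{\Dk\betab}$ is at most $\mu\|\Dk\|^2$-weakly convex in $\betab$ since $\glam{\bm{u}}+\tfrac{\mu}{2}\|\bm{u}\|_2^2$ is convex in $\bm{u}$ and $\|\Dk\betab\|_2^2\le\|\Dk\|^2\|\betab\|_2^2$. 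Any first-order stationary point $\betahat$ is therefore the unique global minimizer, so strong convexity gives
\[
f(\betahat)+\tfrac{1-\mu\|\Dk\|^2}{2}\|\betahat-\betab\|_2^2\le f(\betab)\quad\text{for all }\betab\in\R^n.
\]
Expanding with $\y=\betastar+\bm{\epsilon}$ produces the basic inequality
\[
\|\betahat-\betastar\|_2^2+(1-\mu\|\Dk\|^2)\|\betahat-\betab\|_2^2\le\|\betab-\betastar\|_2^2+2\langle\bm{\epsilon},\betahat-\betab\rangle+2\glam{\Dk\betab}-2\glam{\Dk\betahat}.
\]

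Next, I decompose $\betahat-\betab=\Dkdag\Dk(\betahat-\betab)+P_N(\betahat-\betab)$, where $P_N$ is the orthogonal projector onto the $C_G$-dimensional null space of $\Dk$, and set $\bm{z}=(\Dkdag)^\top\bm{\epsilon}$, so $\langle\bm{\epsilon},\betahat-\betab\rangle=\langle\bm{z},\Dk(\betahat-\betab)\rangle+\langle P_N\bm{\epsilon},\betahat-\betab\rangle$. Standard Gaussian concentration then delivers, with probability at least $1-2\delta$, both $\|P_N\bm{\epsilon}\|_2^2\le\sigma^2[C_G+2\sqrt{2C_G\log(1/\delta)}]$ from a Laurent--Massart $\chi^2_{C_G}$ tail, and $\|\bm{z}\|_\infty\le\sigma\zeta_k\sqrt{2\log(r/\delta)}\le\lambda$ from a union bound over $r$ Gaussian coordinates each of variance at most $\sigma^2\zeta_k^2$.

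Splitting $\Dk(\betahat-\betab)$ over $T$ and $T^c$, the $T$ block is bounded via H\"older and the compatibility factor as $2|\langle\bm{z}_T,(\Dk(\betahat-\betab))_T\rangle|\le 2\|\bm{z}\|_\infty\tfrac{\sqrt{|T|}}{\kappa_T}\|\betahat-\betab\|_2$, and an AM--GM split produces the deterministic contribution $\tfrac{8\sigma^2\zeta_k^2|T|\log(er/\delta)}{(1-\mu\|\Dk\|^2)\kappa_T^2}$ plus a quadratic residual that is absorbed by the $(1-\mu\|\Dk\|^2)\|\betahat-\betab\|_2^2$ on the left. The same AM--GM move applied to $2|\langle P_N\bm{\epsilon},\betahat-\betab\rangle|$ delivers the $\|P_N\bm{\epsilon}\|_2^2$-based term in~\eqref{eq:scalar_oracle} plus another absorbed residual. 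For the $T^c$ block, H\"older gives $|\langle\bm{z}_{T^c},(\Dk(\betahat-\betab))_{T^c}\rangle|\le\|\bm{z}\|_\infty\|(\Dk(\betahat-\betab))_{T^c}\|_1$, and combining with $2\glam{(\Dk\betab)_{T^c}}-2\glam{(\Dk\betahat)_{T^c}}$ under Assumption~\ref{assumptions_1} closes the argument: non-negativity of $\glam{\cdot}$ lets us drop $-2\glam{(\Dk\betahat)_{T^c}}$, while the pointwise bound $\rlam{t}\le\lambda|t|$ together with $\|\bm{z}\|_\infty\le\lambda$ controls the remainder by a multiple of $\glam{(\Dk\betab)_{T^c}}$. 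Taking the infimum over $\betab$ yields~\eqref{eq:scalar_oracle}.

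The vector-valued case~\eqref{eq:vector_oracle} follows the same template with the noise replaced by $\bm{E}$ and $\bm{z}\mapsto\bm{Z}=(\Dkdag)^\top\bm{E}\in\R^{r\times d}$: $\|P_N\bm{E}\|_{\mathrm{F}}^2\sim\sigma^2\chi^2_{dC_G}$, and a union bound over the row norms $\|\bm{Z}_{\ell\cdot}\|_2^2\sim\sigma^2\|(\Dkdag)_{\cdot\ell}\|_2^2\chi^2_d$ motivates the scaling $\lambda=\sigma\zeta_k\sqrt{2d\log(edr/\delta)}$. The matrix compatibility factor $\kappa_{T,d}$ bounds $\sum_{\ell\in T}\|(\Dk(\Bhat-\B))_{\ell\cdot}\|_2\le\tfrac{\sqrt{|T|}}{\kappa_{T,d}}\|\Bhat-\B\|_{\mathrm{F}}$, playing exactly the role of $\kappa_T$ in the scalar argument. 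The main technical obstacle sits in the $T^c$ bookkeeping: unlike the convex Lasso setting where $\rlam{t}=\lambda|t|$ permits exact cancellation of $\|(\Dk\betahat)_{T^c}\|_1$ against $-2\lambda\|(\Dk\betahat)_{T^c}\|_1$, here one only has the pointwise upper bound $\rlam{t}\le\lambda|t|$, so the AM--GM allocations, the strong-convexity buffer, and the choice $\lambda\ge\|\bm{z}\|_\infty$ must be tracked carefully to realize the stated constants.
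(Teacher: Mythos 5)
Your opening move is a genuinely different, and valid, route into the basic inequality: under $\mu < 1/\|\Dk\|^2$, Assumption~\ref{assumptions_1}(c) does make $f$ globally $(1-\mu\|\Dk\|^2)$-strongly convex, so every stationary point is the unique global minimizer and the strong-convexity inequality produces essentially the same starting estimate that the paper derives directly from the first-order condition and a subgradient manipulation. From there your noise decomposition into row space and null space, the Gaussian and chi-squared tail bounds, the compatibility-factor bound on the $T$ block, and the Young/AM--GM absorption into the $(1-\mu\|\Dk\|^2)\|\betahat-\betab\|_2^2$ buffer all track the paper's proof.

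The gap is in your $T^c$ bookkeeping, and you have half-identified it yourself without resolving it. After H\"older you are holding $2\lambda\|(\Dk(\betahat-\betab))_{T^c}\|_1$, and your plan is to drop $-2\glam{(\Dk\betahat)_{T^c}}$ by non-negativity and then control what remains by a multiple of $\glam{(\Dk\betab)_{T^c}}$ via $\rlam{t}\le\lambda|t|$. That inequality points the wrong way: it gives $\glam{(\Dk\betab)_{T^c}}\le\lambda\|(\Dk\betab)_{T^c}\|_1$, not the reverse, and the reverse fails badly for SCAD/MCP since $\rho$ saturates. Worse, once $-2\glam{(\Dk\betahat)_{T^c}}$ is discarded, the triangle inequality leaves you with a term $2\lambda\|(\Dk\betahat)_{T^c}\|_1$ depending on the estimate itself, with nothing left to cancel it and nothing on the right-hand side of \eqref{eq:scalar_oracle} that could absorb it. The paper's resolution is to convert the \emph{entire} noise term before any $T/T^c$ split, using the consequence of Assumption~\ref{assumptions_1}(b)--(c) that $\lambda\|\x\|_1 \le \glam{\x} + \frac{\mu}{2}\|\x\|_2^2$ applied to $\x = \Dk(\betahat-\betab)$; this replaces $\lambda\|\Dk(\betahat-\betab)\|_1$ by $\glam{\Dk(\betahat-\betab)}$ plus a quadratic that the strong-convexity buffer absorbs, and then subadditivity and symmetry of $\rho$ give $\glam{\Dk(\betahat-\betab)} + \glam{\Dk\betab} - \glam{\Dk\betahat} \le 2\glam{(\Dk(\betahat-\betab))_T} + 2\glam{(\Dk\betab)_{T^c}}$, in which the $T^c$ part of the difference is cancelled against $-\glam{(\Dk\betahat)_{T^c}}$ rather than dropped, and only the $T$ block ever meets the compatibility factor. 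Without this step, or an equivalent one, your argument does not close.
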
%
 
 \begin{remark} \label{remark:mu}
Recall that $\mu$ is defined in Assumption~\ref{assumptions_1} (c), which characterizes how ``non-convex'' the regularizer is, and dictates the inflection point in Fig.~\ref{fig:penfuns}. The assumption $\mu < 1/\|\Dk\|^2$ in Theorem~\ref{thm:error_oracle} therefore implicitly constrains the level of non-convexity of the regularizer. Take MCP in \eqref{eq:mcp} for example: since $\mu \geq 1/\gamma $,  we can guarantee the existence of a valid $\mu$ such that $\mu < 1/\|\Dk\|^2$ as long as we set $\gamma > \|\Dk\|^2$.
\end{remark}

Theorem~\ref{thm:error_oracle} allows one to select $\betab$ and $T$ to optimize the error bounds on the right hand side of \eqref{eq:scalar_oracle} and \eqref{eq:vector_oracle}. For example, pick $\betab=\betastar$ in \eqref{eq:scalar_oracle} (hence an ``oracle'') to have
\begin{align} 	\label{eq:our_bound}
 \frac{ \| \betahat-\betastar  \|_2^2}{n} & \leq  \frac{  4 \glam {(\Dk \betastar)_{T^c}}}{n}   \nonumber \\
	&\quad + \frac{2\sigma^2\left[   C_{G}^{\delta} + 8 \zeta_k^2\kappa_{T}^{-2} |T| \log (\frac{er}{\delta}) \right]}{n (1 -  \mu  \| \Dk \|^2)},
\end{align}
where $C_{G}^{\delta} =C_G + 2 \sqrt{2 C_G \log(\frac{1}{\delta})}$.

\begin{itemize}
\item By setting $T$ as an empty set, we have 
\begin{align} 	
 \hspace{-0.6cm}\frac{ \| \betahat-\betastar  \|_2^2}{n} & \leq  \frac{  4 \glam {\Dk \betastar}}{n} + \frac{2\sigma^2C_{G}^{\delta}}{n (1 -  \mu  \| \Dk \|^2)}, \label{eq: scalar_oracle_emptyT}
\end{align}
which suggest that the reconstruction accuracy improves when the ground truth $\bm{\beta}^{\star}$ is better aligned with the graph structure, and consequently the value of $g(\Dk \betab^{\star})$ is small. 

\item On the other hand, by setting $T$ as the support of $\Dk\betastar$, we achieve
\begin{align*} 	
 \frac{ \| \betahat-\betastar  \|_2^2}{n} & \leq   \frac{2\sigma^2\left[   C_{G}^{\delta}  + 8 \zeta_k^2\kappa_{T}^{-2} \|\Dk\betastar\|_0  \log (\frac{er}{\delta}) \right]}{n (1 -  \mu  \| \Dk \|^2)},
\end{align*}
which grows linearly as we increase the sparsity level $\|\Dk\betastar\|_0$. 
\end{itemize}

Similar discussions can be conducted for vector-GTF by choosing $\B= \B^\star$ in \eqref{eq:vector_oracle}. More importantly, we can directly compare the performance of vector-GTF with scalar-GTF, which was formulated for vector-valued graph signals in \eqref{eq:sep_gtf}. The error bound of vector-GTF pays a small price in the order of $\log d$, but is tighter than scalar-GTF if $ h((\Dk\B^\star)_{T^c}) \ll\sum_{j=1}^d g((\Dk\B^\star_{\cdot j})_{T^c})$. This suggests that vector-GTF is much more advantageous when the support sets of $\Dk \B^\star_{\cdot j}$ for $j=1, \ldots, d$ overlap, i.e. when the local discontinuities and patterns in $\B^\star_{\cdot j}$ are shared.

\subsection{Comparison with Scalar-GTF using $\ell_1$ Regularization}

We compare our error bound for scalar-GTF that is on $ \| \betahat-\betastar  \|_2^2/n$ with \cite[Theorem 3]{wang_trend_2016}, which is obtained for GTF with the $\ell_1$ penalty, reproduced below for convenience.
\begin{myThm}[Basic error bound of $\ell_1$ GTF minimizer]
If $\lambda=\Theta(\sigma\zeta_k\sqrt{\log r})$, then $\betahat$, the minimizer of \eqref{eq:gtf_ell1}, satisfies
\begin{align*}
&	\frac{ \| \betahat-\betastar  \|_2^2}{n}  = O\left(  \frac{\lambda \|\Dk \betastar\|_1}{n} + \frac{\sigma^2 C_G}{n} \right). 
\end{align*}
\label{thm:l1-gtf}
\end{myThm}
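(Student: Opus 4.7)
Since the $\ell_1$ GTF objective in \eqref{eq:gtf_ell1} is convex, the minimizer $\betahat$ exists globally, and optimality against the feasible point $\betastar$ gives the basic inequality
\begin{equation*}
\tfrac{1}{2}\|\y - \betahat\|_2^2 + \lambda\|\Dk\betahat\|_1 \le \tfrac{1}{2}\|\y - \betastar\|_2^2 + \lambda\|\Dk\betastar\|_1.
\end{equation*}
Substituting $\y = \betastar + \bm{\epsilon}$ and expanding yields
\begin{equation*}
\tfrac{1}{2}\|\betahat - \betastar\|_2^2 \le \bm{\epsilon}^{\top}(\betahat - \betastar) + \lambda\bigl(\|\Dk\betastar\|_1 - \|\Dk\betahat\|_1\bigr),
\end{equation*}
reducing the problem to controlling the stochastic cross-term in terms of the noise level and the structure of $\Dk$.

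The plan is to decompose the error relative to $\ker(\Dk)$, whose dimension equals $C_G$. Writing $\bm{P}$ for the orthogonal projector onto $\ker(\Dk)$ and using $\bm{I}-\bm{P}=\Dkd\Dk$, one splits
\begin{equation*}
\bm{\epsilon}^{\top}(\betahat-\betastar) \;=\; \bm{\epsilon}^{\top}\bm{P}(\betahat-\betastar) \;+\; \bigl((\Dkd)^{\top}\bm{\epsilon}\bigr)^{\top}\Dk(\betahat-\betastar).
\end{equation*}
For the first summand, Cauchy--Schwarz combined with standard $\chi^2$ concentration gives $\bm{\epsilon}^{\top}\bm{P}(\betahat-\betastar)\le\|\bm{P}\bm{\epsilon}\|_2\,\|\betahat-\betastar\|_2$ with $\|\bm{P}\bm{\epsilon}\|_2^2 = O(\sigma^2 C_G)$ w.h.p., since $\bm{P}\bm{\epsilon}$ is Gaussian supported on a $C_G$-dimensional subspace. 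For the second, H\"older's inequality bounds it by $\|(\Dkd)^{\top}\bm{\epsilon}\|_\infty\cdot\|\Dk(\betahat-\betastar)\|_1$; each entry of $(\Dkd)^{\top}\bm{\epsilon}$ is Gaussian with variance at most $\sigma^2\zeta_k^2$ by the definition of $\zeta_k$, so a union bound over the $r$ entries yields $\|(\Dkd)^{\top}\bm{\epsilon}\|_\infty = O(\sigma\zeta_k\sqrt{\log r})$ w.h.p., matching the prescribed order of $\lambda$.

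To finish, apply the triangle inequality $\|\Dk(\betahat-\betastar)\|_1\le\|\Dk\betahat\|_1+\|\Dk\betastar\|_1$ inside the H\"older bound. With $\lambda$ dominating the dual noise, the coefficient of $\|\Dk\betahat\|_1$ in the basic inequality is non-positive and that term drops, while the coefficient of $\|\Dk\betastar\|_1$ is at most $2\lambda$, leaving
\begin{equation*}
\tfrac{1}{2}\|\betahat-\betastar\|_2^2 \le \|\bm{P}\bm{\epsilon}\|_2\,\|\betahat-\betastar\|_2 + 2\lambda\|\Dk\betastar\|_1.
\end{equation*}
A Young-type inequality $ab\le \tfrac14 a^2 + b^2$ absorbs $\|\betahat-\betastar\|_2^2$ into the left-hand side, yielding $\|\betahat-\betastar\|_2^2 = O(\sigma^2 C_G + \lambda\|\Dk\betastar\|_1)$; dividing by $n$ gives the stated rate.

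The main technical obstacle is the dual-noise bound: the error rate inherits the graph-theoretic quantity $\zeta_k$ precisely because controlling $\|(\Dkd)^{\top}\bm{\epsilon}\|_\infty$ requires uniform bounds on the column norms of $\Dkd$, and the $\sigma^2 C_G$ term appears because the $\ell_1$ penalty exerts no shrinkage along $\ker(\Dk)$, so noise on that low-dimensional subspace must be paid for directly. Relative to the non-convex oracle inequality of Theorem~\ref{thm:error_oracle}, the present convex case is simpler in that one never needs to invoke a first-order stationarity condition or the weak-convexity margin $\mu$ from Assumption~\ref{assumptions_1}(c), and the term $\lambda\|\Dk\betastar\|_1$ arises naturally by just picking $\betab=\betastar$ in the oracle bound and using triangle inequality on the penalty.
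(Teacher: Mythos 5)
Your proof is correct. Note that the paper does not actually prove this statement itself --- it is reproduced verbatim from Theorem~3 of the cited graph trend filtering paper of Wang et al.~\cite{wang_trend_2016} --- but your argument is essentially the same machinery the paper uses in Appendix~A to prove the more general Theorem~\ref{thm:error_oracle}: the same split of $\bm{\epsilon}^{\top}(\betahat-\betab)$ into the $\ker(\Dk)$ component (controlled by $\chi^2$ concentration with $C_G$ degrees of freedom) and the row-space component (controlled via H\"older and the $\ell_\infty$ bound $\|(\Dkd)^{\top}\bm{\epsilon}\|_\infty \lesssim \sigma\zeta_k\sqrt{\log r}$), followed by the same Young-type absorption. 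The only genuine differences are the natural simplifications the convex case affords: you start from the basic inequality given by global optimality rather than the first-order stationarity condition (so no weak-convexity margin $\mu$ is needed), and you effectively take $T=\emptyset$ so the compatibility factor never enters --- exactly the specialization the paper points to when it compares Theorem~\ref{thm:l1-gtf} with \eqref{eq: scalar_oracle_emptyT}.
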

The above bound is comparable to our bound in the special case of setting $T$ to an empty set, i.e. \eqref{eq: scalar_oracle_emptyT}. The first term of the bound in \eqref{eq: scalar_oracle_emptyT} is upper bounded by that of Theorem~\ref{thm:l1-gtf}. The non-convex regularization yields especially tighter bounds when $\Dk \betab^{\star}$ contains large coefficients, so that $ g(\Dk \betab^{\star}) \ll \lambda\| \Dk \betab^{\star} \|_1 $. On the other hand, the second term of \eqref{eq: scalar_oracle_emptyT} contains $1-\mu\|\Dk\|^2$ in the denominator, which makes it an upper bound of the second term in Theorem~\ref{thm:l1-gtf}. This gap can be brought down by choosing a larger $\gamma$, which allows one to pick a smaller $\mu$, as mentioned in Remark~\ref{remark:mu}. However, as $\gamma \to \infty$, non-convex SCAD and MCP also tends to $\ell_1$, which erases the improvement from using non-convex regularizers in the first term of the bound. This indicates a trade-off in the overall error bound based on $\gamma$, or the ``non-convexity" of the regularizers chosen for scalar-GTF. 
 
To sum up, despite being non-convex, we can guarantee that any stationary point of the proposed GTF estimator possesses strong statistical guarantees.

\subsection{Error Rates for Erd\H{o}s-R\'enyi Graphs}
We next specialize Theorem~\ref{thm:error_oracle} to the Erd\H{o}s-R\'enyi random graphs using spectral graph theory \cite{chung_spectra_2011}. Let $d_{\max}$ and $d_0$ respectively be the maximum and expected degree of the graph. It is known that for any graph it holds \cite{wang_trend_2016} 
\begin{equation} \label{eq:zeta_bound}
\zeta_k \le \lambda_{\min}(\bm{\Delta}^{(2)})^{-\frac{k+1}{2}},
\end{equation} 
where $\lambda_{\min}(\bm{\Delta}^{(2)})$ is the smallest {\em non-zero} eigenvalue of the graph Laplacian matrix $\bm{\Delta}^{(2)}$. Moreover, we have $ \| \Dk \|^2 = (\lambda_{\max}( \bm{\Delta}^{(2)}) )^{k+1}$, and  $d_{\max} +1\le \lambda_{\max}(\bm{\Delta}^{(2)})  \leq  2d_{\max} $~\cite{chung_spectral_1997}.  Next, we present a simple lower bound on $\kappa_T$, which is proved in Appendix~\ref{proof_prop_kappa}.
\begin{myProposition}[Bound on $\kappa_T$] \label{prop:kappa}
 $\kappa_T$ is bounded for any $ T $ and $d$ as
\begin{equation}
    \kappa_T(\Dk) \ge (2d_{\max})^{-\frac{k+1}{2}}. \nonumber
\end{equation}
\end{myProposition}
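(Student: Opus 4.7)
\textbf{Proof plan for Proposition~\ref{prop:kappa}.} The plan is to upper-bound the denominator $\sum_{\ell \in T}\|(\Dk\B)_{\ell\cdot}\|_2$ of the ratio defining $\kappa_T$ by a constant multiple of $\sqrt{|T|}\,\|\B\|_{\mathrm{F}}$, so that the ratio is bounded below uniformly in $\B$. The constant will be exactly $(2d_{\max})^{(k+1)/2}$, coming from the spectral norm of $\Dk$.

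First, I would apply Cauchy--Schwarz to the sum over $T$ to move from a sum of $\ell_2$ norms to a single Frobenius-type quantity:
\begin{equation*}
\sum_{\ell\in T}\|(\Dk\B)_{\ell\cdot}\|_2 \;\le\; \sqrt{|T|}\,\sqrt{\sum_{\ell\in T}\|(\Dk\B)_{\ell\cdot}\|_2^{\,2}} \;\le\; \sqrt{|T|}\,\|\Dk\B\|_{\mathrm{F}}.
\end{equation*}
Next I would use the standard spectral-norm bound $\|\Dk\B\|_{\mathrm{F}} \le \|\Dk\|\cdot\|\B\|_{\mathrm{F}}$, which is immediate column-by-column.

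Finally I would invoke the spectral bound on $\Dk$ recalled immediately before the proposition: $\|\Dk\|^2 = \lambda_{\max}(\bm{\Delta}^{(2)})^{k+1}$ together with $\lambda_{\max}(\bm{\Delta}^{(2)}) \le 2d_{\max}$, giving $\|\Dk\| \le (2d_{\max})^{(k+1)/2}$. Chaining the inequalities yields
\begin{equation*}
\sum_{\ell\in T}\|(\Dk\B)_{\ell\cdot}\|_2 \;\le\; \sqrt{|T|}\,(2d_{\max})^{\frac{k+1}{2}}\,\|\B\|_{\mathrm{F}},
\end{equation*}
so that, for every nonzero $\B \in \R^{n\times d}$,
\begin{equation*}
\frac{\sqrt{|T|}\,\|\B\|_{\mathrm{F}}}{\sum_{\ell\in T}\|(\Dk\B)_{\ell\cdot}\|_2} \;\ge\; (2d_{\max})^{-\frac{k+1}{2}},
\end{equation*}
and taking the infimum over $\B$ gives the desired lower bound on $\kappa_T(\Dk)$. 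The case $T=\varnothing$ is immediate from the convention $\kappa_{\varnothing,d}=1 \ge (2d_{\max})^{-(k+1)/2}$.

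There is no real obstacle here; the proposition is a direct consequence of Cauchy--Schwarz and the operator-norm bound. The only thing worth being slightly careful about is that the inequality is dimension-free in $d$, which is why the bound holds uniformly for vector-valued signals: both Cauchy--Schwarz and the spectral-norm inequality are applied on a per-column basis (equivalently, to the matrix $\B$ as a whole), so the constant $(2d_{\max})^{-(k+1)/2}$ does not depend on $d$, justifying the slight abuse of notation $\kappa_T := \kappa_{T,d}$ made earlier in the paper.
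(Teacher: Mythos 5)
Your proof is correct and reaches the stated bound by a more direct route than the paper's. After the same Cauchy--Schwarz step, you immediately relax the row-restricted quantity to the full Frobenius norm $\|\Dk\B\|_{\mathrm{F}}$ and invoke the global operator-norm bound $\|\Dk\|\le(2d_{\max})^{(k+1)/2}$, which the paper has already recorded via $\|\Dk\|^2=\lambda_{\max}(\bm{\Delta}^{(2)})^{k+1}$ and $\lambda_{\max}(\bm{\Delta}^{(2)})\le 2d_{\max}$. The paper instead keeps the restriction to the rows indexed by $T$, factors $(\Dk\B)_T$ as $(\bm{\Delta})_T\bm{\Delta}^{(k)}\B$ for even $k$ or $(\bm{\Delta}^\top)_T\bm{\Delta}^{(k)}\B$ for odd $k$, and bounds the spectral norm of the restricted factor separately: for even $k$ via $\lambda_{\max}\left((\bm{\Delta})_T^\top(\bm{\Delta})_T\right)\le\max_{(u,v)\in T}\{d_u+d_v\}$, and for odd $k$ via Cauchy's interlacing theorem applied to the principal submatrix $\bm{\Delta}^{(2)}_{T\times T}$; it then relaxes both to $2d_{\max}$ and finishes with $\|\bm{\Delta}^{(k)}\|\le(2d_{\max})^{k/2}$, arriving at the same constant. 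What the paper's longer detour buys is a potentially sharper, $T$-dependent intermediate bound (e.g., $\max_{(u,v)\in T}\{d_u+d_v\}$ can be well below $2d_{\max}$ when $T$ avoids high-degree nodes), though that refinement is discarded in the final statement; your argument is shorter and fully suffices for the proposition as stated. Your handling of $T=\varnothing$ and the observation that the constant is dimension-free in $d$ are both correct.
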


For an Erd\H{o}s-R\'enyi random graph, if $d_0=\Omega\left(\log(n)\right)$, we have $d_{\max}=O(d_0)$ almost surely \cite[Corollary 8.2]{blum2016foundations} and $C_G=1$. Furthermore, $\lambda_{\min}(\bm{\Delta}^{(2)})=\Omega(d_0-\sqrt{d_0})$ \cite{wang_trend_2016,chung_spectra_2011,lubotzky_ramanujan_1988}, and $r = n$ for odd $k$ and $r=O(nd_0)$ for even $k$. Therefore, with probability at least $1-n^{-10}$, we have
\begin{align*}
  \frac{\| \betahat-\betastar  \|_2^2}{n} &\lesssim \frac{ \sigma^2\sqrt{\log n}}{n}   \\
&  +  \min\left\{ \frac{  \glam {\Dk \betastar}}{n} ,\frac{ \sigma^2 \|\Dk \betastar\|_0 \log n   }{n}  \right\} ,
\end{align*}
where $  \glam {\Dk \betastar}   \lesssim \frac{ \sigma \|\Dk \betastar\|_1 \sqrt{\log n} }{ d_0^{(k+1)/2}}$ by plugging in $\glam {\Dk \betastar} \leq \lambda \|\Dk \betastar\|_1$.

These results are also applicable to $d_0$-regular Ramanujan graphs \cite{lubotzky_ramanujan_1988}.

\subsection{Support Recovery}
An alternative yet important metric for gauging the success of the proposed GTF estimators is support recovery, which aims to localize the discontinuities in the piecewise smooth graph signals, i.e. the support set of $\Dk \betab^{\star}$, that is 
\begin{equation}
	S_k(\betab^{\star}) = \left \{ t \in \{ 1,\cdots,r \}: (\Dk \betab^{\star})_t \neq 0  \right \} .
\end{equation}
In particular, for odd $k$, the discontinuities correspond to graph nodes; and for even $k$, they correspond to the edges. Let $\betahat$ be the GTF estimate of the graph signal. The quality of the support recovery can be measured using the~\emph{graph screening distance} \cite{lin_approximate_2016}. For any $t_1 \in S_k(\betab^{\star}) $ and $t_2 \in S_k(\betahat)$, let $d_G(t_1,t_2)$ denote the length of the shortest path between them. The distance of $S_k(\betahat)$ from $S_k(\betab^{\star}) $ is then defined as
\begin{align}  
	&d_G (S_k(\betahat)| S_k(\betastar)) \nonumber \\
	&= 
	\begin{cases}
		\displaystyle\max_{t_1 \in S_k(\betastar)}\min_{t_2 \in S_k(\betahat)} d_G(t_1,t_2), &\text{if}~S_k(\betastar) \neq \emptyset\\
		\quad\infty &\text{otherwise} 
	\end{cases} .
\end{align}

Interestingly, Lin et.al. \cite{lin_approximate_2016} showed recently that under mild assumptions,
one can translate the error bound into a support recovery guarantee. Specifically, letting $R_n$ be the RHS of \eqref{eq:scalar_oracle} that bounds the error $\|\betahat -\betastar\|_2^2/n$ in Theorem~\ref{thm:error_oracle}, we have
\begin{equation} \label{eq:support_recovery}
	d_G(S_k(\betahat)| S_k(\betastar)) = \begin{cases}
		O\left(\frac{ R_n}{H_r^2} \right) 	 , & k=0 \\ 
		O\left( \frac{R_n^{1/3}}{H_r^{2/3}} \right)  	 , &  k =1 \\
	\end{cases},
\end{equation}
where $H_r$ quantifies the minimum level of discontinuity, defined as the minimum absolute value of the non-zero values of $\Dk \betab^{\star}$, i.e.
\begin{equation}
	H_r = \min_{t \in S_k(\betastar)}  | (\Dk \betab^{\star})_t |.
\end{equation}
Consequently, this leads to support recovery guarantees of the proposed GTF estimators. Numerical experiment in Section~\ref{sec:simulation_denoising} verifies the superior performance of the non-convex regularizers over the $\ell_1$ regularizer for support recovery.

\section{ADMM Algorithm and its Convergence}\label{sec:admm}
There are many algorithmic approaches to optimize the vector-GTF formulation in \eqref{eq:gtfvec}, since scalar-GTF \eqref{eq:gtf} can be regarded as a special case with $d=1$. In this section, we illustrate the approach adopted in this paper, which is the Alternating Direction Method of Multipliers (ADMM) framework for solving separable optimization problems~\cite{boyd_distributed_2011}. 

Via a change of variable as $\bm{Z} =\Dk\B$, we can transform \eqref{eq:gtfvec} to 
\begin{align*}
    \min_{\B \in \R^{n\times d} } \frac{1}{2}\|\bm{Y}-\B \|_{\mathrm{F}}^2 + h( \bm{Z}; \lambda, \gamma) \quad \text{   s.t.  }\;   \bm{Z}=\Dk\B.
  \end{align*}
Its corresponding Lagrangian can be written as:
  \begin{align}
    \mathcal{L}(\B, \bm{Z}, \bm{U}) &= \frac{1}{2}\|\bm{Y}-\B\|_{\mathrm{F}}^2+ h( \bm{Z}; \lambda, \gamma) \nonumber \\
    &+ \frac{\tau}{2}\|\Dk\B-\bm{Z}+\bm{U}\|_{\mathrm{F}}^2  -\frac{\tau}{2}\|\bm{U}\|_{\mathrm{F}}^2  ,\label{eq:nc-gtf-lag-vec}
\end{align}
where $\bm{U} \in \R^{r \times d}$ is the Lagrangian multiplier, and $\tau$ is the parameter. Alg.~\ref{alg:admmvec} shows the ADMM updates based on the Lagrangian in \eqref{eq:nc-gtf-lag-vec}. Recall the proximal operator is defined
as $\texttt{Prox}_{f}(\bm{v}; \alpha) = \argmin_\x \frac{1}{2}\|\x-\bm{v}\|_2^2 + \alpha f(\x) $ for a function $f(\cdot)$. $\ell_1$, SCAD and MCP all admit closed-form solutions of $\texttt{Prox}$, which are simple thresholding operations \cite{huang_selective_2012}. Furthermore, we have the following convergence guarantee for Alg.~\ref{alg:admmvec}, whose proof is provided in Appendix~\ref{proof:convergence_admm}. Theorem~\ref{thm:conv} implies that the output of Alg.~\ref{alg:admmvec} satisfies Theorem~\ref{thm:oracle}.
\begin{myThm}
	\label{thm:conv}
	Let $\tau \geq \mu$, then Alg.~\ref{alg:admmvec} converges to a stationary point of \eqref{eq:gtfvec}.
\end{myThm}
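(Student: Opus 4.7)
The plan is to follow the standard framework for non-convex ADMM convergence analysis (in the spirit of Hong--Luo--Razaviyayn and Wang--Yin--Zeng): establish monotonic sufficient descent of the augmented Lagrangian $\mathcal{L}$, show $\mathcal{L}$ is bounded below, deduce summability of successive differences, and verify that any accumulation point satisfies the KKT conditions of \eqref{eq:gtfvec}, which in turn implies it is a stationary point.

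First, I would write down the optimality conditions for each subproblem. The $\B$-update is a strongly convex quadratic with unique minimizer and gradient condition $\B^{k+1}-\bm{Y}+\tau\Dkt(\Dk\B^{k+1}-\bm{Z}^k+\bm{U}^k)=\bm{0}$. The $\bm{Z}$-update minimizes $h(\bm{Z})+\tfrac{\tau}{2}\|\Dk\B^{k+1}-\bm{Z}+\bm{U}^k\|_{\mathrm{F}}^2$; under Assumption~\ref{assumptions_1}(c) applied row-wise (since $\rho(\|\cdot\|_2)+\tfrac{\mu}{2}\|\cdot\|_2^2$ is convex as the composition of a non-decreasing convex function with a convex function), this subproblem is $(\tau-\mu)$-strongly convex whenever $\tau\geq\mu$, and the optimality condition reads $\tau\bm{U}^{k+1}\in\partial h(\bm{Z}^{k+1})$ after combining with the dual update $\bm{U}^{k+1}=\bm{U}^k+\Dk\B^{k+1}-\bm{Z}^{k+1}$.

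Next, I would quantify $\mathcal{L}^k-\mathcal{L}^{k+1}$ by splitting across the three updates. The $\B$-step gives decrease at least $\tfrac{1}{2}\|\B^{k+1}-\B^k\|_{\mathrm{F}}^2$ by strong convexity of the data-fit term. The $\bm{Z}$-step gives decrease at least $\tfrac{\tau-\mu}{2}\|\bm{Z}^{k+1}-\bm{Z}^k\|_{\mathrm{F}}^2$ from the $(\tau-\mu)$-convexity above. A short direct computation shows the dual update produces an increase of exactly $\tau\|\bm{U}^{k+1}-\bm{U}^k\|_{\mathrm{F}}^2$. To offset this increase, one invokes $\tau(\bm{U}^{k+1}-\bm{U}^k)\in\partial h(\bm{Z}^{k+1})-\partial h(\bm{Z}^k)$ together with the fact that for the SCAD and MCP penalty families $\partial h$ is Lipschitz (with constant $\mu$ under appropriate subgradient selection), yielding $\|\bm{U}^{k+1}-\bm{U}^k\|_{\mathrm{F}}\leq\tfrac{\mu}{\tau}\|\bm{Z}^{k+1}-\bm{Z}^k\|_{\mathrm{F}}$. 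Picking $\tau\geq\mu$ and recombining yields
\begin{equation*}
\mathcal{L}^k-\mathcal{L}^{k+1}\;\geq\;\tfrac{1}{2}\|\B^{k+1}-\B^k\|_{\mathrm{F}}^2+c\,\|\bm{Z}^{k+1}-\bm{Z}^k\|_{\mathrm{F}}^2
\end{equation*}
for some $c>0$ (strictly positive when $\tau>\mu$, and recoverable for $\tau=\mu$ by an absorption argument). I would then bound $\mathcal{L}^{k+1}$ from below by using the inclusion $\tau\bm{U}^{k+1}\in\partial h(\bm{Z}^{k+1})$ and the weak-convexity inequality for $h$ to eliminate the indefinite cross terms, ending with a sum of non-negative quantities (the data-fit term, $h(\bm{Z}^{k+1})\geq 0$, and a manifestly non-negative quadratic remainder).

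Combining sufficient descent and boundedness below gives convergence of $\{\mathcal{L}^k\}$, hence summability of $\|\B^{k+1}-\B^k\|_{\mathrm{F}}^2$ and $\|\bm{Z}^{k+1}-\bm{Z}^k\|_{\mathrm{F}}^2$, and then $\|\bm{U}^{k+1}-\bm{U}^k\|_{\mathrm{F}}\to 0$ from the bound in Step 2 and the primal residual $\|\Dk\B^{k+1}-\bm{Z}^{k+1}\|_{\mathrm{F}}\to 0$ from the dual update. Coercivity of $\mathcal{L}$ in $(\B,\bm{Z})$ together with the just-obtained control on $\bm{U}$ yields boundedness of iterates, so a subsequential limit $(\Bhat,\widehat{\bm{Z}},\widehat{\bm{U}})$ exists. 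Passing to the limit in the three optimality conditions (using outer semicontinuity of $\partial h$) gives $\widehat{\bm{Z}}=\Dk\Bhat$, $\tau\widehat{\bm{U}}\in\partial h(\widehat{\bm{Z}})$, and $\Bhat=\bm{Y}-\tau\Dkt\widehat{\bm{U}}$; eliminating $\widehat{\bm{U}}$ and $\widehat{\bm{Z}}$ shows $\bm{0}\in\Bhat-\bm{Y}+\Dkt\partial h(\Dk\Bhat)$, i.e., $\Bhat$ is a stationary point of \eqref{eq:gtfvec}. The main obstacle is the dual-increment bound in Step 2, since the subgradient of $h$ is genuinely set-valued at rows where $(\Dk\B)_{\ell\cdot}=0$; the resolution is to show that the specific subgradients selected by the proximal update of $\bm{Z}$ are automatically $\mu$-co-Lipschitz, which is a consequence of the $1$-to-$1$ structure of the proximal map of a weakly convex function with parameter $\tau\geq\mu$.
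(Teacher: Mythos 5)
Your overall strategy --- sufficient descent of the augmented Lagrangian, lower-boundedness, vanishing residuals, and passage to the limit in the subproblem optimality conditions --- is a genuinely different route from the paper's. The paper instead observes that for $\tau\ge\mu$ the Lagrangian \eqref{eq:nc-gtf-lag-vec} is convex in $\B$, in $\U$, and in each row of $\Z$ separately, invokes Tseng's convergence theorem for cyclic block minimization to obtain limit points $(\B^\star,\Z^\star,\U^\star)$, and then derives primal feasibility not from any descent estimate but from the telescoped dual update $\U^{(m+t)}=\U^{(m)}+\sum_{i=1}^{t}(\Dk\B^{(m+i)}-\Z^{(m+i)})$, which in the limit forces $\Dk\B^\star=\Z^\star$; stationarity of $\B^\star$ for \eqref{eq:gtfvec} then follows from the $\B$-subproblem optimality condition. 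Your final limit-passing step (outer semicontinuity of $\partial h$, elimination of $\widehat{\bm{Z}}$ and $\widehat{\bm{U}}$) is fine and consistent with this.

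However, your argument has a genuine gap at its central step, the dual-increment bound $\|\U^{k+1}-\U^k\|_{\mathrm{F}}\le\tfrac{\mu}{\tau}\|\Z^{k+1}-\Z^k\|_{\mathrm{F}}$. The $\Z$-update optimality condition does give $\tau\U^{k+1}_{\ell\cdot}\in\partial\bigl[\rho(\|\cdot\|_2)\bigr](\Z^{k+1}_{\ell\cdot})$, but $\rho(\|\cdot\|_2)$ is not differentiable at the origin: its subdifferential there is the whole ball of radius $\lambda$, and the group-proximal map sends the entire region $\bigl\{\|\D_{\ell\cdot}\B^{k+1}+\U^{k}_{\ell\cdot}\|_2\le\lambda/\tau\bigr\}$ to $\Z^{k+1}_{\ell\cdot}=\bm{0}$ while the selected subgradient $\tau(\D_{\ell\cdot}\B^{k+1}+\U^{k}_{\ell\cdot})$ keeps moving with the input. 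Concretely, on any row thresholded to zero at two consecutive iterations --- the generic situation, since producing many such rows is the entire purpose of the penalty --- one has $\Z^{k+1}_{\ell\cdot}=\Z^{k}_{\ell\cdot}=\bm{0}$ yet $\U^{k+1}_{\ell\cdot}-\U^{k}_{\ell\cdot}=\D_{\ell\cdot}\B^{k+1}$, which is exactly the (nonzero) primal residual on that row and is not controlled by $\|\Z^{k+1}_{\ell\cdot}-\Z^{k}_{\ell\cdot}\|_2=0$. Your proposed resolution via ``$\mu$-co-Lipschitz'' selected subgradients fails for the same reason: the prox of a weakly convex function is single-valued and Lipschitz in its \emph{input}, but it is not injective, so the map from output to selected subgradient is genuinely set-valued at $\bm{0}$ and admits no Lipschitz selection there. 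This is precisely why the sufficient-descent analyses you cite require the block updated last before the dual step to carry a Lipschitz-differentiable objective, which $h$ is not. A secondary issue: even granting the bound, the constants do not close at $\tau=\mu$, since the $\Z$-step then yields zero guaranteed decrease while the dual step adds $\tau\|\U^{k+1}-\U^k\|_{\mathrm{F}}^2\le\mu\|\Z^{k+1}-\Z^k\|_{\mathrm{F}}^2$; one would need roughly $\tau\ge 2\mu$. To repair the proof you would either have to reorder the updates so the smooth quadratic block is minimized last (and then deal with the fact that its optimality condition only controls $\Dkt\U^{k+1}$, not $\U^{k+1}$), or abandon the Lyapunov route in favor of the paper's per-block-convexity argument.
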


\begin{algorithm}[t]
\caption{ADMM for solving \eqref{eq:gtfvec}}\label{alg:admmvec}
\begin{algorithmic}[1]
\State \textbf{Inputs:} $\bm{Y}, \Dk$, and parameters $\lambda$, $\gamma$, $\tau$
    \Initialize{     $\B \gets \bm{Y}$ or $\B_\texttt{init}$ if given.\\
    $\D \gets\Dk$, $\bm{Z} \gets \D\B$, $\bm{U} \gets \D\B - \bm{Z}$\\
    $\bm{X} \gets (\bm{I} +\tau\D^{\top}\D)^{-1}$
}
\Repeat
        \For{\texttt{$j \gets$ 1 to num\_cols($\B$)}}
        \State $\B_{\cdot j} \gets \bm{X}(\tau\D^{\top}(\bm{Z}_{\cdot j}-\bm{U}_{\cdot j})+\bm{Y}_{\cdot j})$
        \EndFor 
        \For{\texttt{$\ell \gets$ 1 to num\_rows($\D\B$)}}
        \State {$\bm{Z}_{\ell \cdot} \gets \texttt{Prox}_{\rho} (\|\D_{\ell \cdot} \B + \bm{U}_{\ell \cdot}\|_2; \lambda/\tau) $}
        \EndFor 
        \State $\bm{U}\gets  \bm{U} + \D \B-\bm{Z}$
\Until{termination}
\end{algorithmic}
\end{algorithm}
In addition, we provide a detailed time complexity analysis of Alg.~\ref{alg:admmvec} in Table~\ref{table:complexity}. Note that since $\bm{\Delta}$ is a sparse matrix with exactly $2m$ non-zero entries, Alg.~\ref{alg:admmvec} can run much faster when $k=0$. As a preprocessing step for each $\D$, we compute $\bm{V} \in \R^{n\times n}$ and $\bm{S} \in \R^{n \times n}$, the eigenvectors and eigenvalues of $\D^{\top}\D$, exactly once. $\bm{X}=\bm{V}(\bm{I}+\tau \bm{S})^{-1}\bm{V}^\top$ can then be initialized very efficiently for all experiments that use $\D$. 
\begin{table}[h]
 \footnotesize
  \begin{center}
    \begin{tabular}{@{}rcc@{}}
      \toprule
        &     \normalsize  \bf{$k \ge 1$}  &     \normalsize \bf{$k=0$}  \\    \midrule \addlinespace[2mm]  
$\D^\top\D$ eigen decomposition & $O(rn^2+n^3)$ &  $O(m^2+n^3)$ \\\midrule \addlinespace[1mm] 
$\Z$ initialization & $O(rnd)$ &  $O(md)$ \\\addlinespace[1mm]
$\bm{X}$ initialization  & $O(n^2)$ & $O(n^2)$ \\\addlinespace[1mm]
$\B$ update  & $O(d(nr+n^2))$   & $O(d(m+n^2))$  \\\addlinespace[1mm]
$\D\B$ calculation  & $O(rnd)$ &   $O(md)$ \\\addlinespace[1mm]
$\Z, \U$ update &  $O(rd)$     &   $O(rd)$           \\\addlinespace[2mm]
Total after $t$ iterations & $O(tdrn+tdn^2)$      & $O(tdm+tdn^2)$        \\
      \bottomrule
    \end{tabular}
  \end{center}
  \caption{\label{table:complexity}
  Time complexity analysis of Alg.~\ref{alg:admmvec}.
 }
\end{table}

\section{Numerical Experiments}\label{sec:numerical}
For the following experiments, we fixed $\gamma=3.7$ for SCAD, $\gamma=1.4$ for MCP. Further, the graphs we use in the following experiments satisfy Assumption 2 for this choice of $\gamma$. Unless explicitly mentioned, we tuned $\lambda$ and $\frac{\tau}{\lambda}$ for each experiment using the Hyperopt toolbox \cite{bergstra_making_2013}. 
To meet the convergence criteria in Theorem~\ref{thm:conv}, we enforced $\tau \ge 1/\gamma$. SCAD/MCP were warm-started with the GTF estimate with $\ell_1$ penalty. Python packages PyGSP \cite{defferrard_pygsp:_nodate} and NetworkX \cite{hagberg_exploring_2008} were used to construct and plot graphs. The input signal SNR was calculated as $ 10\log_{10}(\|\Bstar\|_\mathrm{F}/\sigma^2nd)$, while the reconstructed signal SNR was calculated as $10\log_{10}(\|\Bstar\|_\mathrm{F}/\|\Bhat-\Bstar\|_\mathrm{F})$, where $\Bhat$ was the reconstruction. Computation time was measured with MacBook Pro 2017 with an 2.9 GHz Intel Core i7 and 16GB RAM. Our code is available at \url{https://github.com/HarlinLee/nonconvex-GTF-public}.

\subsection{Denoising via GTF with Non-convex Regularizers}\label{sec:simulation_denoising}

\begin{figure}[t]
    \centering
    \includegraphics[width=\linewidth]{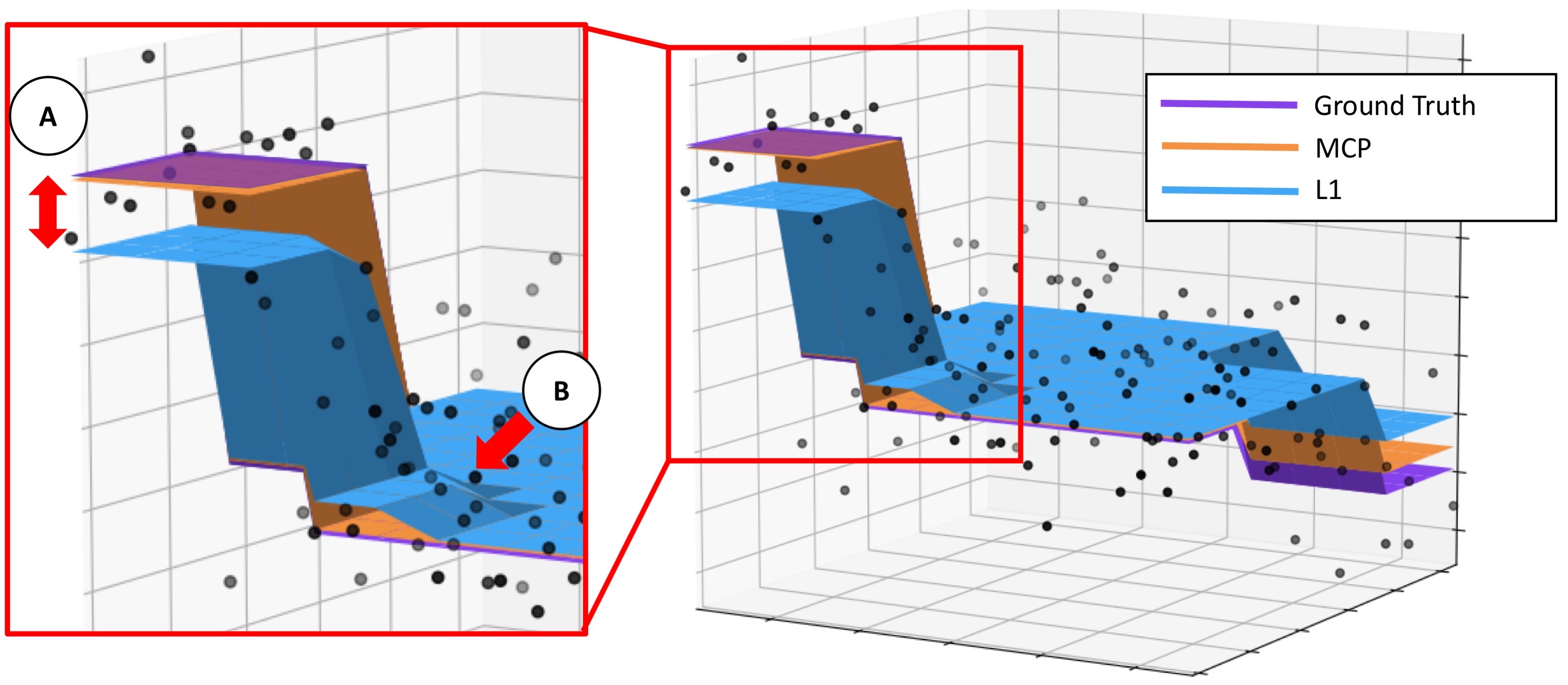}
    \caption{Scalar-GTF with MCP (orange) has much lower bias than scalar-GTF with $\ell_1$ (blue) when estimating a piecewise constant signal over a $12\times 12$ grid graph. See highlighted regions pointed by red arrows in A and B. The scatter points correspond to a noisy signal with $5$dB SNR.  }\label{fig:bias}
\end{figure}
\begin{figure}[t]
    \centering
    \includegraphics[width=0.35 \textwidth]{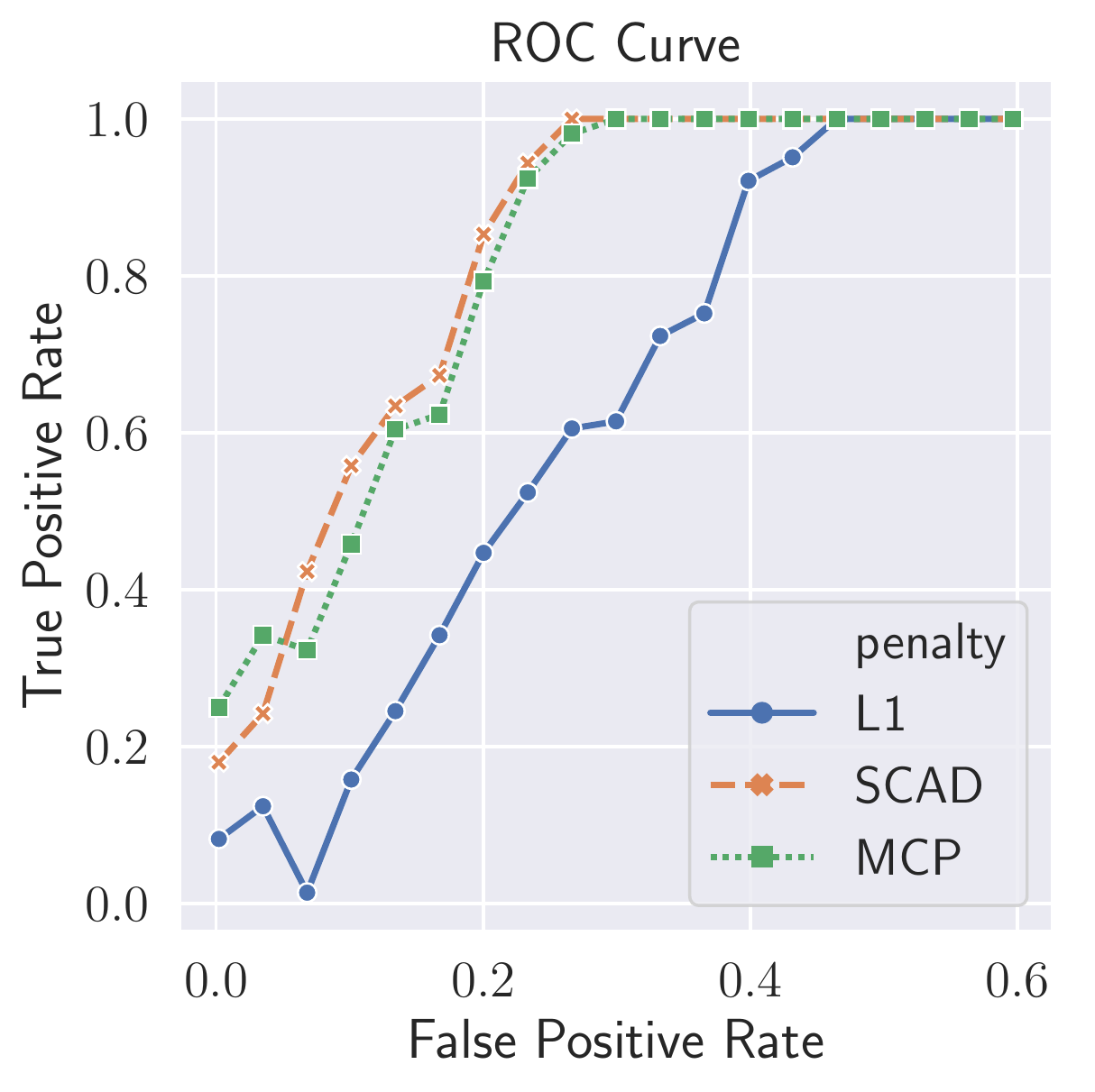}
    \caption{The ROC curve for classifying whether an edge lies on a boundary for the Minnesota road graph signal shown in Fig.~\ref{fig:snr}. The input SNR of the noisy piecewise constant signal is $7.8$dB.  }
    \label{fig:dfsweep}
\end{figure}
We first highlight via synthetic examples two important advantages that non-convex regularizers provide over the $\ell_1$ penalty.

\begin{itemize}
\item \textbf{Bias Reduction:} We demonstrate the reduction in signal bias in Fig.~\ref{fig:bias} for the graph signal defined over a $12\times 12$ 2D-grid graph, using both the $\ell_1$ penalty and the MCP penalty. Clearly, the MCP estimate (orange) has less bias than the $\ell_1$ estimate (blue), and can recover the ground truth surface (purple) more closely.

\item\textbf{Support Recovery:} We illustrate the improved support recovery performance of non-convex penalties on localizing the boundaries for a piecewise constant signal on the Minnesota road graph, shown in Fig.~\ref{fig:snr}. Particularly, we look at how well our estimator localizes the support of $\Dk \betastar$, that is, the discontinuity of the piecewise constant graph signal by looking at how well we can classify an edge as connecting two nodes in the same piece or being a cut edge across two pieces. By sweeping the regularization parameter $\lambda$, we obtain the ROC curve in Fig.~\ref{fig:dfsweep}, i.e. the true positive rate versus the false positive rate of classifying a cut edge correctly, and see that scalar-GTF with MCP and SCAD consistently outperforms the scalar-GTF with $\ell_1$ penalty.
\end{itemize}

\begin{figure*}[th]
\begin{minipage}[b]{0.32\linewidth}
\centering
    \includegraphics[width=\linewidth]{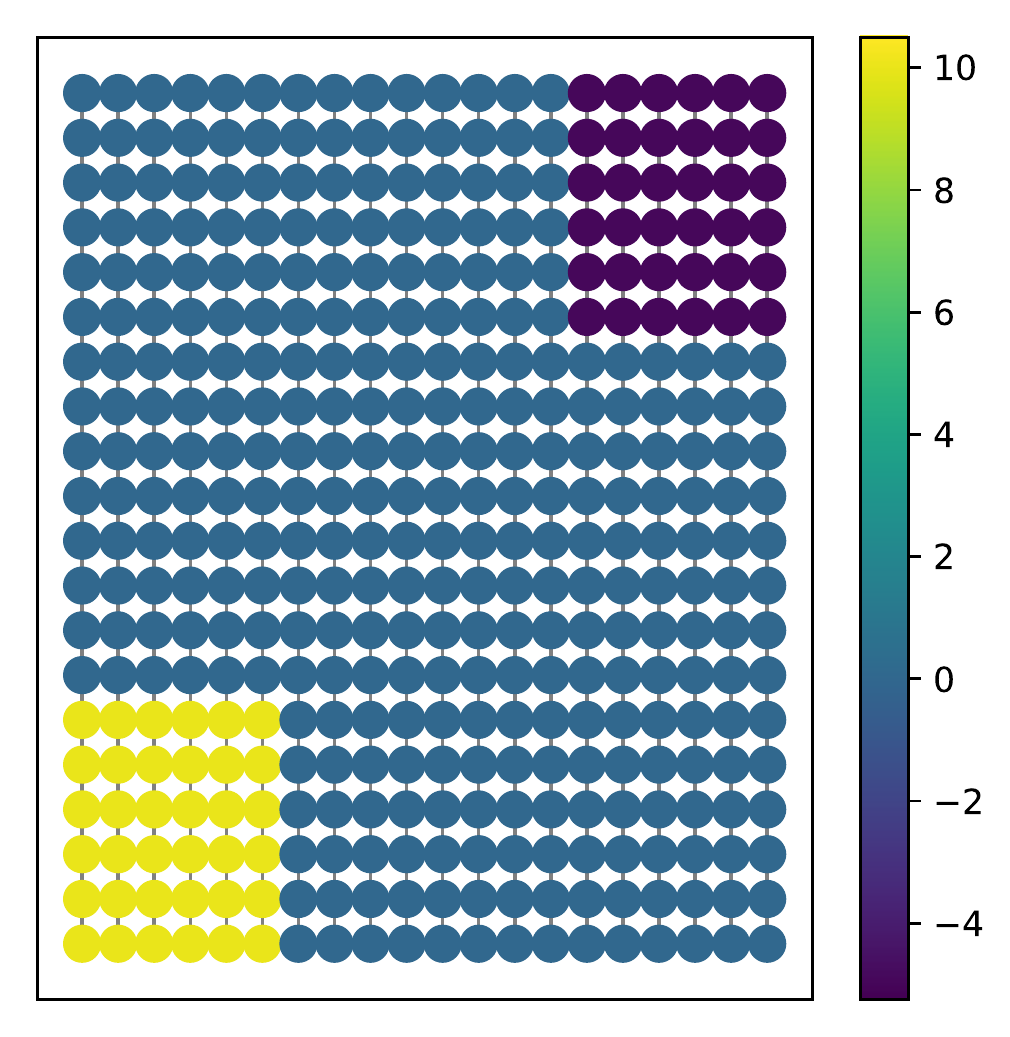}
    \includegraphics[width=\linewidth]{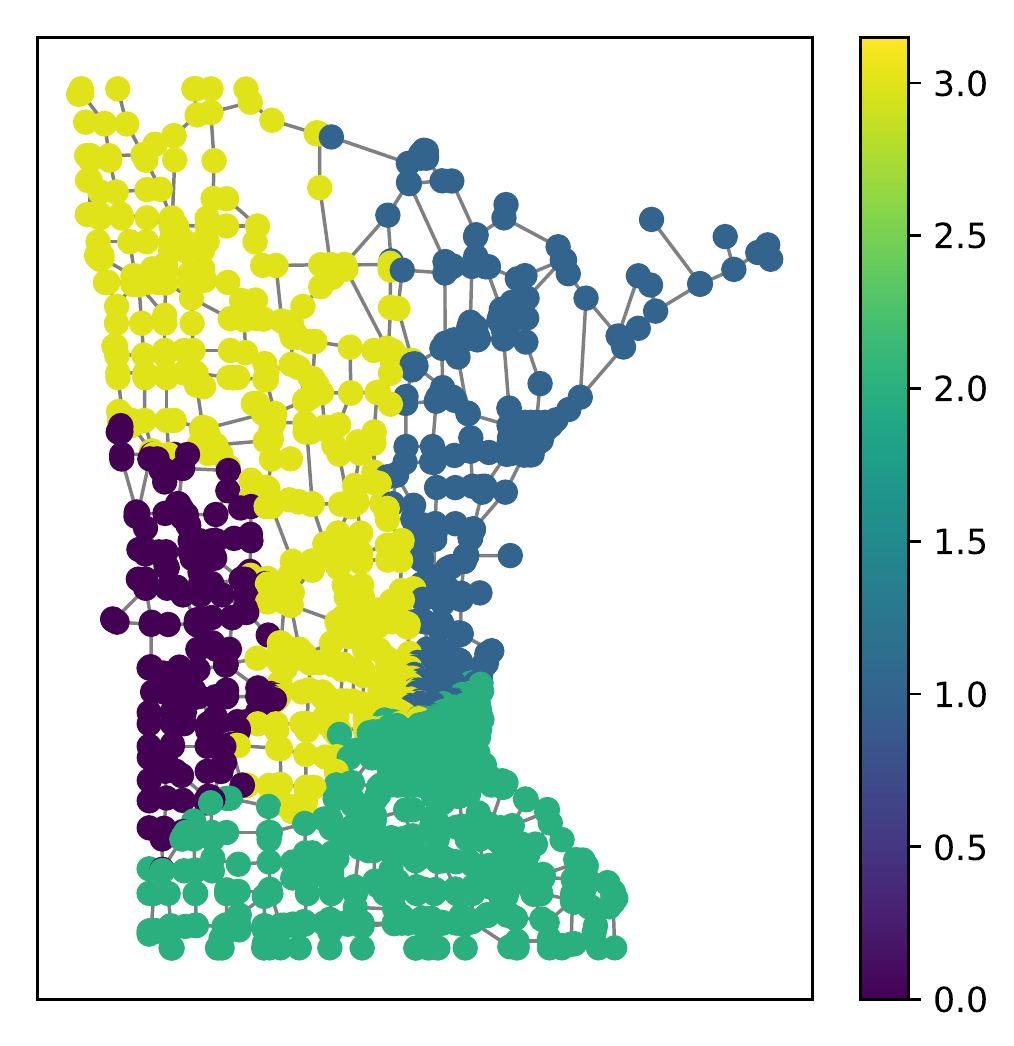} 
\end{minipage}%
\begin{minipage}[b]{0.33\linewidth}
       \includegraphics[width=\linewidth]{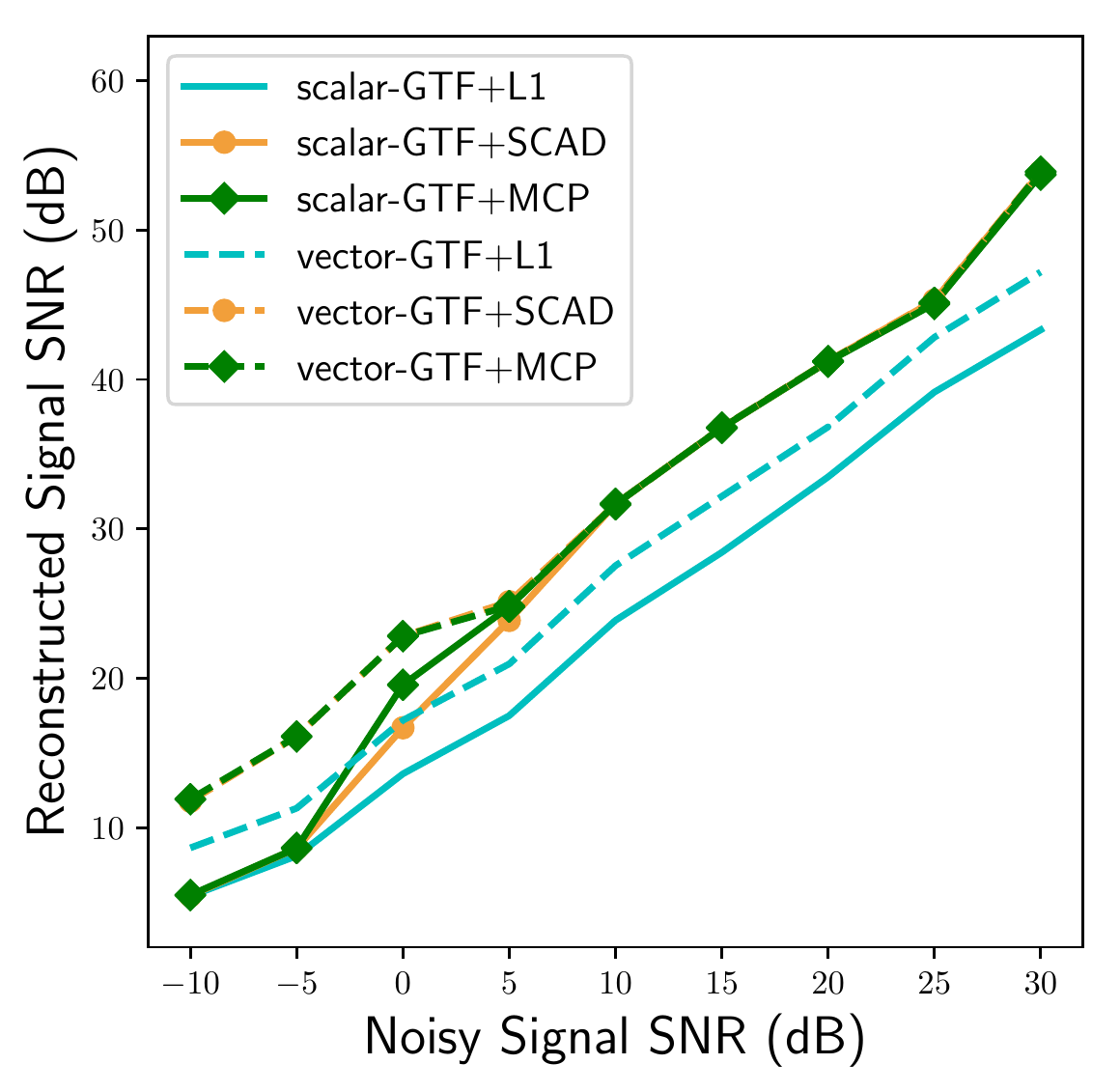}
         \includegraphics[width=\linewidth]{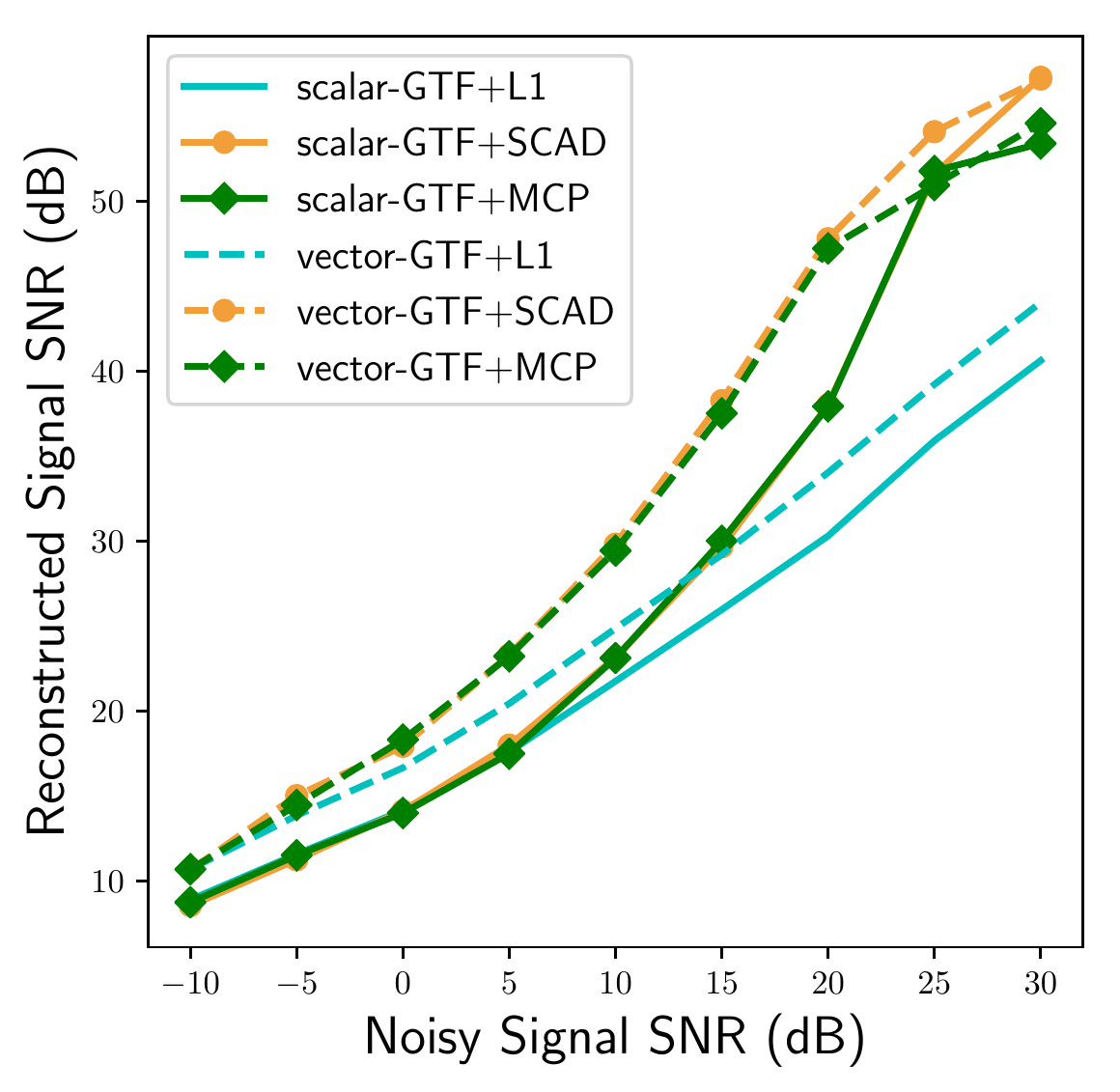}
\end{minipage}%
\begin{minipage}[b]{0.34\linewidth}
\includegraphics[width=\linewidth]{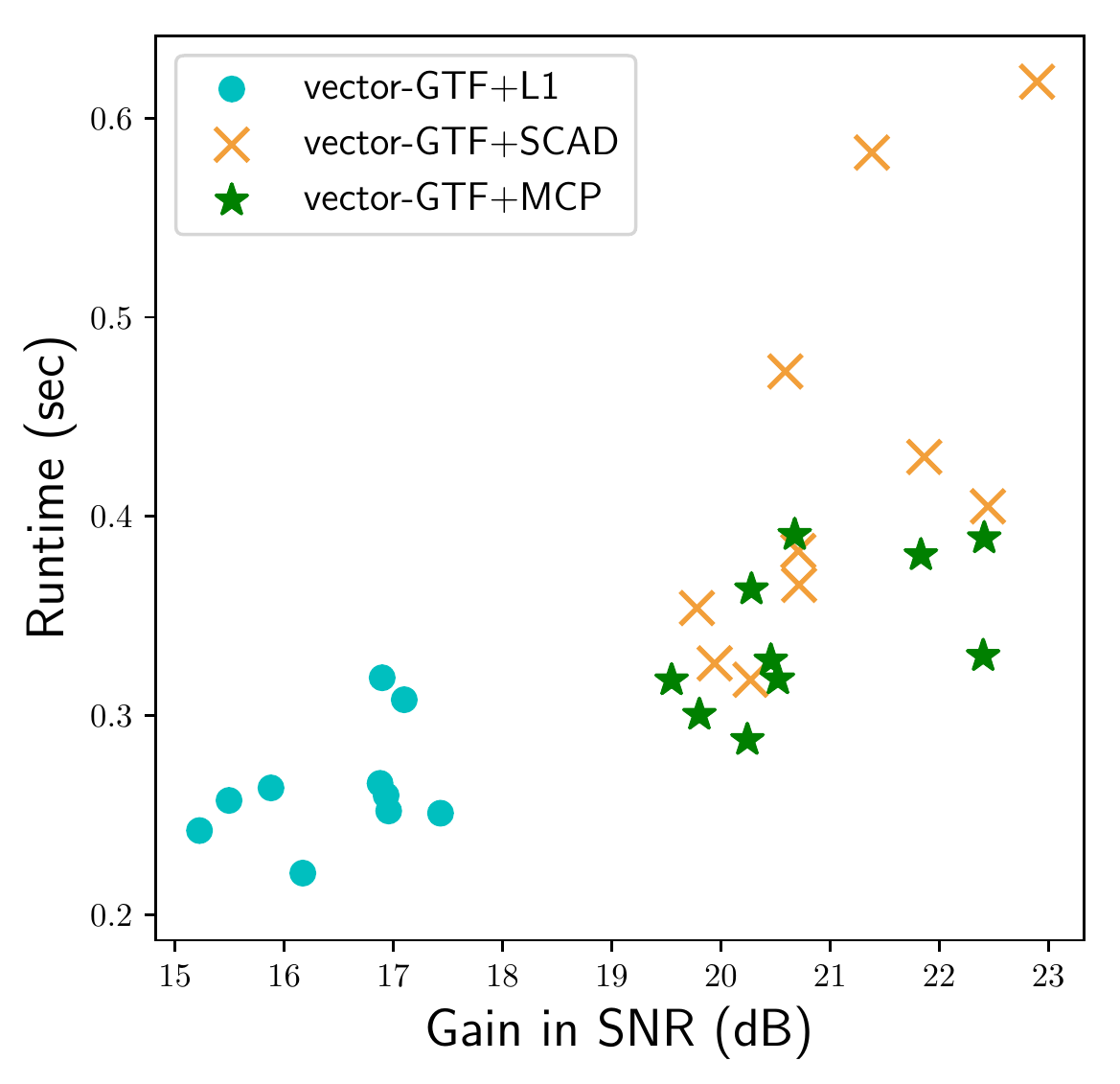}
\includegraphics[width=\linewidth]{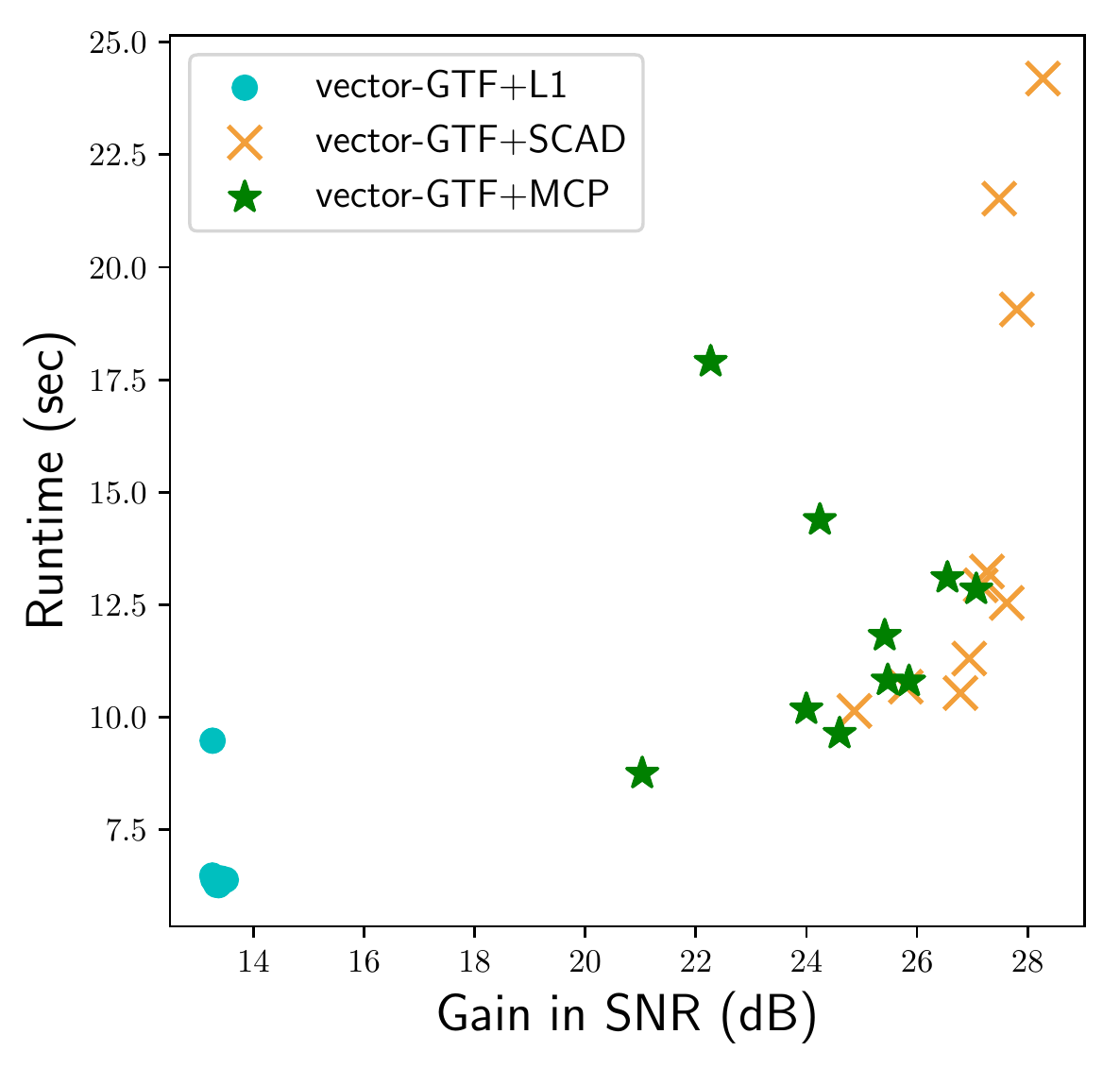}
 \end{minipage}
    \caption{The left panel shows the ground truth piecewise constant signals on $20\times 20$ 2D-grid graph (top), and Minnesota road graph (bottom). The middle panel shows their corresponding plots of input signal SNR versus reconstructed signal SNR, averaged over 10 and 20 realizations, respectively. Finally, the right panel plots the computation time against gain in SNR from denoising via vector-GTF. 10 trials were performed for each regularizer, where the input signal SNR was fixed at 20dB.
   }
   \label{fig:snr}
\end{figure*}
Then, we compare the performance of GTF using non-convex regularizers such as SCAD and MCP with that using the $\ell_1$ norm more rigorously. For the ground truth signal $\betastar$, we construct a piecewise constant signal on a $20 \times 20$ 2D-grid graph and the Minnesota road graph~\cite{defferrard_pygsp:_nodate} as shown in the left panel of Fig.~\ref{fig:snr}, and add different levels of noise following \eqref{eq:model}. We recover the signal by scalar-GTF with Alg.~\ref{alg:admmvec}, and plot the SNR of the reconstructed signal versus the SNR of the input signal \textit{in solid lines} in the middle panel of Fig.~\ref{fig:snr}, averaged over $10$ and $20$ realizations, respectively. SCAD/MCP consistently outperforms $\ell_1$ in denoising graph signals defined over both regular and irregular structures.

\subsection{Denoising Vector-valued Signals via GTF}

We compare the performance of vector-GTF in \eqref{eq:gtfvec} with \eqref{eq:sep_gtf}, which applies scalar-GTF to each column of the vector-valued graph signal. The convex $\ell_1$ norm, and the non-convex SCAD and MCP are employed. We reuse the same ground truth graph signals over the 2D-grid graph and the Minnesota road graph constructed in Section~\ref{sec:simulation_denoising} in Fig.~\ref{fig:snr}. $d$ independent noisy realizations of the graph signal are concatenated to construct a noisy vector-valued graph signal with dimension $d=10$ on the 2D-grid graph and with $d=20$ on the Minnesota road graph. We recover the vector-valued signal by minimizing vector-GTF \eqref{eq:gtfvec} with Alg.~\ref{alg:admmvec}. 

The middle panel of Fig.~\ref{fig:snr} plots the average SNR of the reconstructed signal versus the average SNR of the input signal \textit{in dotted lines}. We emphasize that the performance of \eqref{eq:sep_gtf} is the same as applying scalar-GTF to each realization, which is shown in the middle panel of Fig.~\ref{fig:snr} in solid lines. As before, SCAD/MCP consistently outperforms $\ell_1$ in denoising signals over both regular and irregular graphs. Furthermore, as expected, due to the sharing of information across realizations, vector-GTF consistently outperforms scalar-GTF, especially in the low SNR regime. 

The right panel of Fig.~\ref{fig:snr} plots the computation time versus the gain in SNR from denoising via vector-GTF. 10 trials are performed for each regularizer with the input signal SNR fixed at 20dB. Parameter tuning and eigen decomposition of $\bm{\Delta}^{(2)}$ are preprocessing steps, and hence they are not included in the time measurement; but for reference, the eigen decomposition took 0.025 and 2.5 seconds for 2D-grid and Minnesota graphs, respectively. Since GTF with non-convex regularizers are warm-started by the $\ell_1$ estimate, the runtime for $\ell_1$ GTF is added to the SCAD/MCP runtime. Overall, running vector-GTF with SCAD/MCP after once with $\ell_1$ takes more time, but with large benefits in the denoising performance. Even with the additional computation time, Vector-GTF runs reasonably fast; with the Minnesota road network, where $n=2642$ and $m=3304$, computation takes less than 25 seconds.

\begin{table}[h]
 \footnotesize
  \begin{center}
    {\setlength{\tabcolsep}{0.5em}
    \begin{tabular}{@{}rccccccccc@{}}
      \toprule \addlinespace[2mm]  
        &   \bf{Average} &   \bf{\#1}   &    \bf{\#2}  &     \bf{\#3}  &  \bf{\#4}  &    \bf{\#5}   &    \bf{\#6}  &   \bf{\#7}  &   \bf{\#8}       \\    \addlinespace[1mm]  \midrule \addlinespace[1mm]
Input SNR (dB)&8.7 & -14 & 0 &  0 &  3.5 &   5.8 &  12 &  29 &  34 \\\midrule \addlinespace[1mm] 
Vector-GTF + $\ell_1$ & 29 & \textbf{10} & \textbf{20} & \textbf{23} & \textbf{26} &  \textbf{36} & \textbf{37} & 39 & 38\\\addlinespace[1mm]
Scalar-GTF + $\ell_1$& 21 & 0 & 11 & 13 & 16 &  18 & 26 & 41 & 45 \\\addlinespace[1mm]
Vector-GTF + SCAD& \textbf{32} & \textbf{10 }& \textbf{20} & 22 & 25 & \textbf{36} & 35 & \textbf{49} & \textbf{61}  \\\addlinespace[1mm]
Scalar-GTF + SCAD& 29 & 0 & 15 & 17 & 25 & 35 & 34 & 47 & 60\\\addlinespace[1mm]
Vector-GTF + MCP& \textbf{32} & \textbf{10} & \textbf{20} & 22 &  25 & \textbf{36} & 35 & \textbf{49} & \textbf{61} \\\addlinespace[1mm]
Scalar-GTF + MCP& 29 &  0 & 15 & 22 & 24 & 30 & 33 & \textbf{49} & 60 \\
      \bottomrule
    \end{tabular}}
  \end{center}
  \caption{\label{table:mmv}
  Noisy input and reconstructed signal SNRs for eight measurements of varying input SNRs, rounded to two significant figures. Highest reconstructed signal SNR for each measurement is in \textbf{bold}.
 }
\end{table}

We further investigate the benefit of sharing information across measurements or realizations in the following experiment, using the same ground truth signal on the 2D-grid graph. We stack eight noisy realizations of this same piecewise constant signal to build a vector-valued signal. We construct these noisy measurements by scaling each one of them differently and randomly such that each will have SNR $\sim \log_{10}\mbox{Uniform}[-10,30]$dB under \eqref{eq:model}. This has the effect of rendering some measurements more~\emph{informative} than others, and potentially allowing vector-GTF to reap the benefits of sharing information across  measurements. We recover the 8-dimensional graph signal via Alg.~\ref{alg:admmvec} using $\ell_1$, SCAD, and MCP regularizers, and in Table~\ref{table:mmv}, report the input signal and reconstructed signal SNRs for each measurement in addition to the average SNRs. $\lambda$ is fixed at $0.5\sigma^2$.

First of all, notice that as before, using SCAD/MCP generally achieves results with higher SNR than using $\ell_1$, and that on average, minimizing \eqref{eq:gtfvec} outperforms minimizing \eqref{eq:sep_gtf}. The effect of sharing information across measurements is most apparent in low SNR settings, when information about the boundaries of the graph signal can be borrowed from higher SNR signals to improve the estimation. On the other hand, sharing information with noisier signals does not help denoising signals with high input SNR. However, it is worth noting that, unlike $\ell_1$, SCAD/MCP does not see decrease in its performance in the high SNR settings.

\subsection{Event Detection with NYC Taxi Data}

\begin{figure}[t]
  \begin{minipage}[b]{\linewidth}
    \centering
    \includegraphics[width=0.95\linewidth, height=11cm]{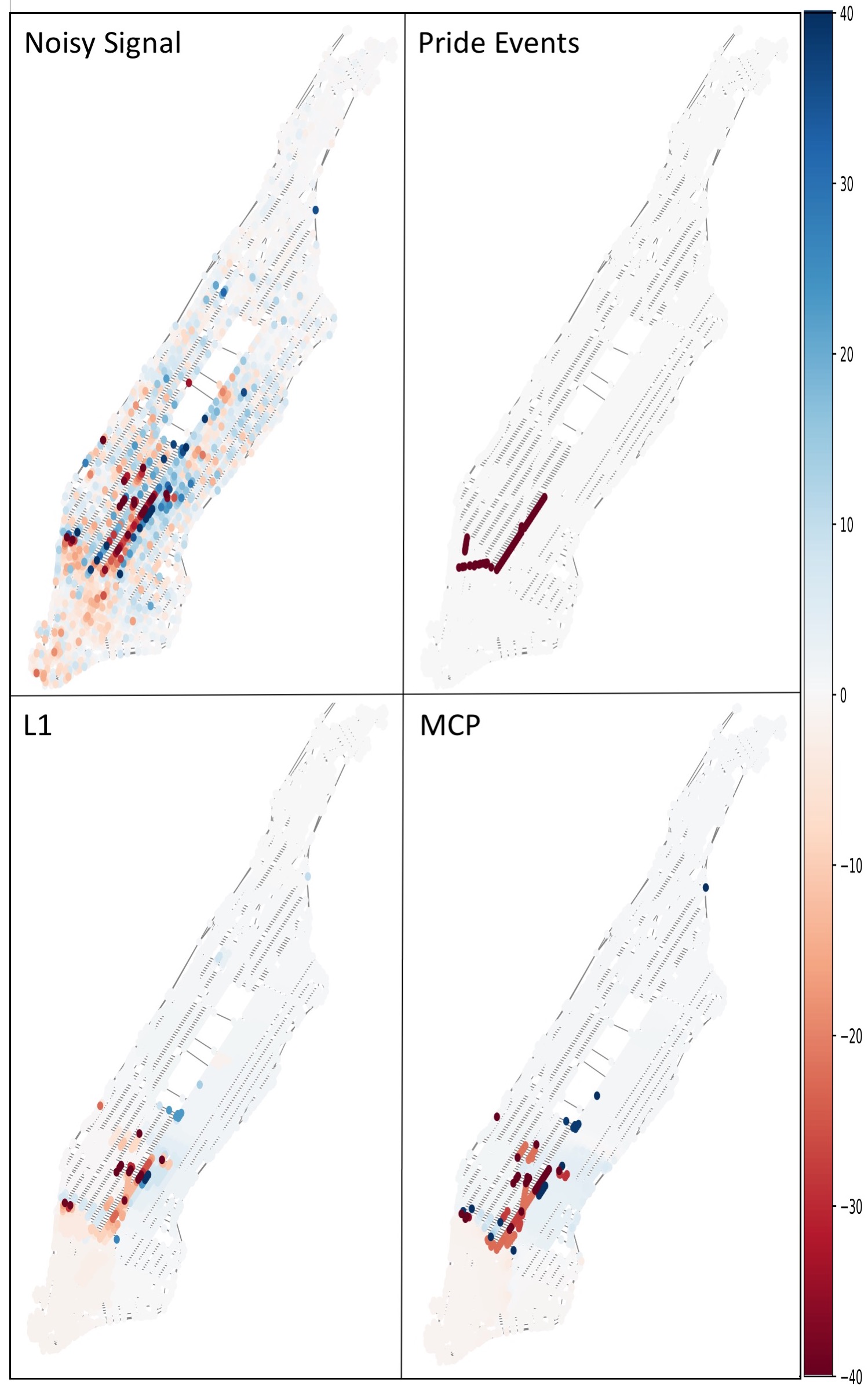}
    \caption{Top left: the noisy signal on the Manhattan road network is the change in the taxi pickup and dropoff count during the 2011 NYC Gay Pride. Top right: areas of Pride events, where the traffic was blocked off. Bottom: the GTF estimates using $\ell_1$ and MCP. The GTF estimate with MCP better detects and localizes the event, compared to the one using $\ell_1$ penalty.}
    \label{fig:nyc_final}
  \end{minipage}
\end{figure}
To further illustrate graph trend filtering on a real-world dataset, we consider the road network of Manhattan where the nodes correspond to junctions~\cite{boeing_osmnx_2017}. We map the pickups and dropoffs of the NYC taxi trip dataset~\cite{taxi_data, socrata} to the nearest road junctions, and define the total count at that junction to be the signal value on the corresponding graph node. The signal of interest, plotted on the top left panel of Fig.~\ref{fig:nyc_final}, is the difference between the \textit{event} graph signal on the day of NYC Gay Pride parade, 12-2pm on June 26, 2011, and the \textit{seasonal average} graph signal at the same time during the 8 nearest Sundays. During the event, no pickups and dropoffs could occur in the areas shown in the top right panel of Fig.~\ref{fig:nyc_final} \cite{nycpride}.
We denoise the signal via GTF using both $\ell_1$ and MCP, where we chose $\lambda$ such that $\|\bm{\Delta}\betahat\|_0 \approx 200$.
Once again, we observe the GTF estimate with MCP produces sharper traces around the parade route, indicating better capabilities of event detection and localization.

\subsection{Semi-supervised Classification}

\begin{table}[h]
\footnotesize
\begin{center}
{\setlength{\tabcolsep}{0.5em}
\begin{tabular}{@{}llrrrrrr@{}}
\toprule \addlinespace[1mm]
 \multicolumn{2}{c}{}
& \textbf{Heart}&\textbf{Wine quality} & \textbf{Wine}  & \textbf{Iris} & \textbf{Breast} &\textbf{Car}  \\ \addlinespace[1mm] \midrule 
\multicolumn{2}{c}{\# of samples ($n$)}                                                                    
& 303 & 1599 & 178  & 150  & 569 & 1728  \\ 
\multicolumn{2}{c}{ \# of classes ($K$)}                                                                  
& 2 & 6 & 3  & 3  & 2 & 4  \\ \addlinespace[1mm] \midrule
\multirow{5}{*}{$k=0$} & $\ell_1$                                                                         
& 0.148 & 0.346        & 0.038 & 0.036 & 0.042  & 0.172 \\\addlinespace[1mm] 
                   & \multirow{2}{*}{\begin{tabular}[c]{@{}l@{}}SCAD\\\addlinespace[1mm] p-value\end{tabular}} 
& 0.148 & \textit{0.353}        & 0.038 & 0.033 & 0.042  & \textbf{0.149}  \\\addlinespace[1mm] 
                   &                                                                            
& 1.  & \textit{0.06}            & 1.  & 0.27  & 1.   & \textbf{0.06}   \\\addlinespace[1mm]  
                   & \multirow{2}{*}{\begin{tabular}[c]{@{}l@{}}MCP\\ \addlinespace[1mm]p-value\end{tabular}}  
& 0.144 & 0.351        & 0.037 & 0.035 & 0.040  &  \textbf{0.148} \\\addlinespace[1mm]
                   &                                                                            
& 0.23  & 0.18         & 0.34  & 0.34  & 0.35   & \textbf{0.05}   \\\addlinespace[1mm]
\midrule
\multirow{5}{*}{$k=1$} & $\ell_1$                                                                        
& 0.143 & 0.351        & 0.034 & 0.039 & 0.035  & 0.104 \\\addlinespace[1mm]  
                   & \multirow{2}{*}{\begin{tabular}[c]{@{}l@{}}SCAD\\\addlinespace[1mm] p-value\end{tabular}} 
& 0.144 & 0.350        & 0.034 & 0.039 & 0.035  & 0.104 \\\addlinespace[1mm] 
                   &                                                                            
& 0.30  & 0.43          & 0.34  & 1.  & 0.71   & 0.66    \\\addlinespace[1mm] 
                   & \multirow{2}{*}{\begin{tabular}[c]{@{}l@{}}MCP\\\addlinespace[1mm] p-value\end{tabular}}  
& \textit{0.146} & 0.350        & 0.034 & 0.039 & \textbf{0.034}  & 0.103  \\ \addlinespace[1mm]
                   &                                                                            
& \textit{0.05}  & 0.44           & 0.34  & 1.  & \textbf{0.02}  & 0.23   \\ \bottomrule
\end{tabular}}
\end{center}
\caption{Misclassification rates averaged over 10 trials, with p-values from running sampled t-tests between SCAD/MCP misclassification rates and the corresponding rates using $\ell_1$. Cases where non-convex penalties perform better than $\ell_1$ with p-value below $0.1$ are highlighted in \textbf{bold}, and where they perform worse are in \textit{italic}.   }
\label{ssl}
\end{table}

Graph-based learning provides a flexible and attractive way to model data in semi-supervised classification problems when vast amounts of unlabeled data are available compared to labeled data, and labels are expensive to acquire~\cite{belkin_regularization_2004,zhu_semi-supervised_2003, talukdar_new_2009}. One can construct a nearest-neighbor graph based on the similarities between each pair of samples, and hope to propagate the label information from labeled samples to unlabeled ones. We move beyond our original problem in \eqref{eq:gtf} to a $K$-class classification problem in a semi-supervised learning setting, where for a given dataset with $n$ samples, we observe a subset of the one-hot encoded class labels, $\bm{Y} \in \R^{n\times K}$, such that $Y_{ij} = 1$ if $i$th sample has been observed to be in $j$th class, and $Y_{ij} = 0$ otherwise. A diagonal indicator matrix $\bm{M} \in \R^{n \times n}$ denotes samples whose class labels have been observed.  Then, we can define the modified absorption problem~\cite{zhu_semi-supervised_2003,wang_trend_2016,talukdar_new_2009} using a variation of GTF to estimate the unknown class probabilities $\B\in\mathbb{R}^{n\times K}$:
\begin{align}
\widetilde{\B}   =& \argmin_{\B\in \R^{n\times K}} \frac{1}{2}\|\bm{M}(\bm{Y}-\B)\|_{\mathrm{F}}^2  \nonumber \\
    &+\sum_{j=1}^{K} g(\Dk \B_{\cdot j}; \lambda, \gamma)+  \epsilon \|\bm{R}-\B\|_{\mathrm{F}}^2 ,
    \label{eq:ssl}
\end{align}
where $\bm{R} \in \R^{n \times K}$ (set to be uniform in the experiment) is a fixed prior belief, and $\epsilon >0$ determines how much emphasis to be given to the prior belief. The labels $\widetilde{\bm{Y}}$ can be estimated using $\widetilde{\B}$ such that $\widetilde{Y}_{ij}=1$ if and only if $j = \arg\max_{1\leq \ell \leq K}\widetilde{B}_{i\ell}$, and otherwise $\widetilde{Y}_{ij}=0$. Note that this can be completely separated into $K$ scalar-GTF problems, one corresponding to each class. 

We applied the algorithm in $\eqref{eq:ssl}$ to 6 popular UCI classification datasets~\cite{asuncion_uci_2007} with $\epsilon = 0.01$. For each dataset, we normalized each feature to have zero mean and unit variance, and constructed a 5-nearest-neighbor graph of the samples based on the Euclidean distance between their features, with edge weights from the Gaussian radial basis kernel. We observed the labels of 20\% of samples in each class randomly. Table \ref{ssl} shows the misclassification rates averaged over $10$ repetitions, which demonstrates that the performance using non-convex penalties such as SCAD/MCP are at least competitive with, and often better than, those with the $\ell_1$ penalty.

\section{Conclusions} \label{sec:conclusions}
We presented a framework for denoising piecewise smooth signals on graphs that generalizes the graph trend filtering framework to handle vector-valued signals using a family of non-convex regularizers. We provided theoretical guarantees on the error rates of our framework, and derived a general ADMM-based algorithm to solve this generalized graph trend filtering problem. Furthermore, we demonstrated the superior performance of these non-convex regularizers in terms of reconstruction error, bias reduction, and support recovery on both synthetic and real-world data. In particular, its performance on semi-supervised classification is investigated. In future work, we plan to further study this approach when the graph signals are observed by indirect and incomplete measurements.

\bibliographystyle{IEEEtran}
\label{sec:refs}
\bibliography{ms.bib}

\appendix

\subsection{Proof of Theorem \ref{thm:oracle}}\label{proof_oracle}
\begin{proof}
We denote $\D =\Dk$. Define $R$ as the row space of $\D$, and $R^{\bot}$ the null space. Let $\mathcal{P}_\mathrm{R}=\Ddag\D$, the projection onto $R$, and $\|\x\|_\mathrm{R}=\|\mathcal{P}_\mathrm{R}\x\|_2$. Additionally, $\mathcal{P}_{\mathrm{R}^{\bot}} = \bm{I} - \Ddag\D$, the projection onto $R^{\bot}$.  
Since $\betahat$ is a stationary point of $f(\betab)$, it follows that 
\begin{align}\label{eq:kkt}
 \bm{0} \in \nabla_\betab f(\betab)|_{\betab=\betahat} = (\betahat-\y) + \nabla_\betab \glam { \D \betab}|_{\betab=\betahat} .
\end{align}
By the chain rule, $\nabla_\betab \glam { \D \betab}|_{\betab=\betahat}=\{\D^\top\bm{z}: \bm{z} \in \nabla_{\bm{x}} \glam{\bm{x}}|_{\bm{x}=\D \betahat} \}$. 
Then by \eqref{eq:kkt}, there exists $\bm{z} \in \nabla_{\bm{x}} \glam{\bm{x}}|_{\bm{x}=\D \betahat}$, such that
$$ \bm{0} = (\betahat-\y) + \D^\top\bm{z} .$$
In particular, $\forall \betab \in \R^n$, we have 
\begin{align}
\betab^\top (\y - \betahat) &= (\D \betab)^\top \bm{z}, \quad  \label{eq:z1}
\end{align}
and, specializing to $\betahat$,
\begin{align}
 \betahat^\top(\y - \betahat) & = (\D \betahat)^\top \bm{z}.  \label{eq:z2}
\end{align}
Subtract \eqref{eq:z2} from \eqref{eq:z1}, and use the definition of subgradient to get $\forall \betab \in \R^n$,
\begin{align} \label{eq:subgradient}
\betab^\top( \y - \betahat )  - \betahat^\top( \y - \betahat ) &  =( \D \betab - \D \betahat)^\top \bm{z}  \nonumber \\ 
&\leq   \glam{ \D \betab} - \glam { \D \betahat}  .
\end{align}
By the measurement model $\y = \betastar + \bm{\epsilon}$ and the polarization equality, i.e. $ 2\bm{a}^\top \bm{b} = \|\bm{a}\|_2^2 + \| \bm{b} \|_2^2 - \| \bm{a} - \bm{b} \|_2^2 $, the left-hand side of \eqref{eq:subgradient} can be rewritten as
\begin{align}
 & \quad \betab^\top( \y - \betahat )  - \betahat^\top( \y - \betahat ) \nonumber \\
 & = (\betab - \betahat)^\top( \betastar - \betahat ) + \bm{\epsilon}^\top(\betab - \betahat) \nonumber \\
	& = \frac{1}{2} \| \betab - \betahat\|_2^2 + \frac{1}{2}  \| \betastar - \betahat \|_2^2 -\frac{1}{2}  \| \betab -\betastar \|_2^2 + \bm{\epsilon}^\top(\betab - \betahat). \label{eq:expand}
\end{align}
Combining \eqref{eq:subgradient} and \eqref{eq:expand} gives us $\forall \betab \in \R^n$
\begin{align}\label{eq:bigineq}
&\quad \| \betahat- \betab\|_2^2 +  \| \betahat - \betastar\|_2^2 \nonumber \\
&\leq  \| \betab -\betastar \|_2^2 + 2 \bm{\epsilon}^\top(\betahat -\betab)  + 2\glam { \D \betab} - 2 \glam {\D \betahat} . 
\end{align}
Let us first consider $\bm{\epsilon}^\top (\betahat -\betab) $. From H\"older's inequality,
\begin{align}
	&	\bm{\epsilon}^\top (\betahat - \betab)
 =(\D^{\dagger}\D\bm{ \epsilon})^\top(\betahat - \betab) +(\mathcal{P}_{\mathrm{R}^{\bot}}\bm{ \epsilon})^\top(\betahat - \betab) \nonumber\\
&	\leq \| (\D^{\dagger})^\top \bm{ \epsilon} \|_\infty \| \D ( \betahat - \betab) \|_1 +\|\mathcal{P}_{\mathrm{R}^{\bot}}\bm{ \epsilon} \|_2 \| \betahat - \betab \|_2.  \label{eq:err}
\end{align}
By standard tail bounds for independent Gaussian random variables, we have with probability at least $1-\delta$,
\begin{equation}\label{eq:tail}
\|(\Ddag)^\top\bm{\epsilon}\|_\infty \le \sigma \zeta_k \sqrt{2 \log (\frac{er}{\delta})}.
\end{equation}
Additionally, recognize that $\|\bm{\epsilon}\|_{\mathrm{R}^{\bot}}^2$ is a chi-squared random variable with $C_G$ degrees of freedom. We can then invoke the one-sided tail bound for chi-squared random variables (c.f. \cite[Example 2.5]{wainwright2019high}) such that for any $0 \le t \le 1$,
\begin{align*}
    P\big( \|\bm{\epsilon}\|_{\mathrm{R}^{\bot}}^2 \geq \sigma^2 C_G (1 + t) \big) \leq \exp \left(\frac{-C_G t^2}{8}\right).
\end{align*}
Consequently, with probability at least $1-\delta$,
\begin{equation}
\label{eq:bound_chi}
  \|\bm{\epsilon}\|_{\mathrm{R}^{\bot}}^2 \leq  \sigma^2 \Big( C_G + 2 \sqrt{2 C_G \log(1/\delta)} \Big) .
\end{equation}
The inequalities \eqref{eq:bound_chi} and \eqref{eq:tail} hold simultaneously with probability at least $1-2\delta$. 
Then, using $\lambda \|\bm{x}\|_1 \le g(\bm{x}) + \frac{\mu}{2}\|\bm{x}\|_2^2$ and $\lambda = \sigma \zeta_k \sqrt{2 \log (\frac{er}{\delta})}\geq \| (\D^{\dagger})^\top \bm{ \epsilon} \|_\infty$, we can bound \eqref{eq:err} further as
\begin{align}
    &\bm{\epsilon}^\top (\betahat - \betab)  \le   \| \mathcal{P}_{\mathrm{R}^{\bot}} \bm{\epsilon} \|_2  \| \betahat - \betab \|_2 + \lambda\| \D ( \betahat - \betab) \|_1 \nonumber \\ 
  &  \le  \| \mathcal{P}_{\mathrm{R}^{\bot}} \bm{\epsilon} \|_2 \| \betahat - \betab \|_2 + \glam{\D(\betahat - \betab)} +  \frac{\mu}{2}\|\D(\betahat - \betab)\|_2^2. \nonumber 
\end{align}
Together with $ \|\D(\betahat - \betab)\|_2^2 \le \|\D\|^2\|(\betahat - \betab)\|_2^2$, we can upper bound \eqref{eq:bigineq} as
\begin{align}
& \|\betahat- \betab\|_2^2 +  \| \betahat  - \betastar \|_2^2 \nonumber \\
& \leq  \| \betab -\betastar \|_2^2 + 2\| \mathcal{P}_{\mathrm{R}^{\bot}} \bm{\epsilon} \|_2 \| \betahat - \betab \|_2 + \mu \| \D  \|^2 \| \betahat - \betab \|_2^2 \nonumber   \\
 & \qquad + 2 \glam { \D ( \betahat - \betab) } + 2 \glam {\D \betab} - 2 \glam { \D \betahat}. \label{eq:err2}
\end{align}
Note that for any set $T$, $\glam{\x} = \sum_{i \in T} \rlam{x_i}+\sum_{j\in T^c}\rlam{x_j} = \glam{(\x)_T}+ \glam{(\x)_{T^c}} $. Therefore, using the triangle inequality and subadditivity and symmetry of $\rho$, 
\begin{align}
	  & \glam { \D ( \betahat - \betab) }  +  \glam { \D \betab}-  \glam { \D \betahat}\nonumber  \\ 
	  & \le \glam { (\D ( \betahat - \betab))_T} + \glam { (\D \betab)_{T^c} } + \glam { (\D \betahat)_{T^c} }\nonumber \\
	  &\quad+ \glam {\D\betab} -  \glam { (\D \betahat)_T}- \glam {(\D \betahat)_{T^c}} \nonumber \\
	  & = \glam { (\D ( \betahat - \betab))_T } + 2\glam {(\D \betab)_{T^c}}  + \glam {( \D \betab)_T} -  \glam { (\D \betahat)_T} \nonumber \\ 
	  & \leq 2 \glam{ (\D( \betahat - \betab))_T}  + 2 \glam { (\D \betab)_{T^c}} . \label{eq:triangle}
\end{align}
We bound \eqref{eq:triangle} further by the compatibility factor,
\begin{align}
\glam{ (\D( \betahat - \betab))_T} & \leq  \lambda \|(\D( \betahat - \betab))_T \|_1  \nonumber \\
&\leq \lambda \sqrt{|T|} \kappa_{T}^{-1} \| \betahat - \betab \|_2.  \label{eq:kappa}
\end{align}
Now combining \eqref{eq:err2}, \eqref{eq:triangle}, and \eqref{eq:kappa}, we then have
\begin{align}
\| \betahat-\betab\|_2^2 +&  \| \betahat - \betastar\|_2^2 \leq \| \betab -\betastar \|_2^2 + 4 \glam {(\D \betab)_{T^c}} \nonumber\\
&+ 2\left( \|\mathcal{P}_{\mathrm{R}^{\bot}}\bm{ \epsilon} \|_2  +  2 \lambda \sqrt{|T|} \kappa_{T}^{-1} \right ) \| \betahat - \betab \|_2 \nonumber \\
& + \mu \| \D  \|^2 \| \betahat - \betab \|_2^2 .\nonumber
\end{align}
Apply Young's inequality, which is $ 2ab \leq  a^2/\epsilon + \epsilon b^2 $ for $\epsilon > 0$, with $a=  \|\mathcal{P}_{\mathrm{R}^{\bot}}\bm{ \epsilon} \|_2  +  2 \lambda \sqrt{|T|} \kappa_{T}^{-1}$, $b = \| \betahat - \betab \|_2$, and $\epsilon=1- \mu \| \D  \|^2 >0$, we have
\begin{align} 
&2\left( \|\mathcal{P}_{\mathrm{R}^{\bot}}\bm{ \epsilon} \|_2  +  2 \lambda \sqrt{|T|} \kappa_{T}^{-1} \right ) \| \betahat - \betab \|_2 \nonumber\\
&\le \frac{1}{\epsilon} \left( \|\mathcal{P}_{\mathrm{R}^{\bot}}\bm{ \epsilon} \|_2  +  2 \lambda \sqrt{|T|} \kappa_{T}^{-1} \right )^2 + \epsilon \| \betahat - \betab \|_2^2 \nonumber
\\
&\le \frac{2}{(1- \mu \| \D  \|^2)} \left(\|\mathcal{P}_{\mathrm{R}^{\bot}}\bm{ \epsilon} \|_2^2  +  4 \lambda^2 |T| \kappa_{T}^{-2} \right)\label{eq:epsilon} \\
&\quad+ (1- \mu \| \D  \|^2) \| \betahat - \betab \|_2^2. \nonumber
\end{align}
Therefore,
\begin{align}
&\| \betahat- \betab\|^2 +  \| \betahat -\betastar\|_2^2 \nonumber\\
& \leq   \| \betab -\betastar \|_2^2 + 4 \glam {(\D \betab)_{T^c}}+  \| \betahat- \betab\|_2^2\nonumber\\
& +  \frac{2}{(1- \mu \| \D  \|^2)} \left(\|\mathcal{P}_{\mathrm{R}^{\bot}}\bm{ \epsilon} \|_2^2  +  4 \lambda^2 |T| \kappa_{T}^{-2} \right).
\end{align}
Cancel $ \| \betahat- \betab\|_2^2$ on both sides, apply the infimum over $\betab$ and plug in the bounds \eqref{eq:bound_chi} to get
\begin{align*}
	 & \| \betahat - \betastar \|_2^2 \leq \inf_{\betab} \left\{ \| \betab -\betastar \|_2^2 +4\glam { (\D \betab)_{T^c}} \right\} \nonumber \\ 
	 &+ \frac{2\sigma^2}{ (1 -  \mu  \| \D \|^2)} \left(   C_G + 2 \sqrt{2 C_G \log(\frac{1}{\delta})}  + \frac{8 \zeta_k^2 |T|}{\kappa_{T}^2} \log (\frac{er}{\delta}) \right ).
\end{align*}

The proof extends for the vector-GTF \eqref{eq:gtfvec} in a similar manner. We need to replace \eqref{eq:err} by
\begin{align}
	& \langle	\bm{E}, \Bhat - \B \rangle
 =\langle \D^{\dagger}\D\bm{E}, \Bhat - \B\rangle +\langle \mathcal{P}_{\mathrm{R}^{\bot}}\bm{E}, \Bhat - \B \rangle \nonumber\\
&\le \lambda \sum_{\ell=1}^r  \big\|\D_{\ell\cdot}(\Bhat - \B)\big\|_2 +\|\mathcal{P}_{\mathrm{R}^{\bot}}\bm{E} \|_\mathrm{F} \| \Bhat - \B \|_\mathrm{F}, \nonumber
\end{align}
where $\|\mathcal{P}_{\mathrm{R}^{\bot}}\bm{E} \|_\mathrm{F}^2 \leq  d \sigma^2 \Big( C_G + 2 \sqrt{2 C_G \log(d/\delta)} \Big) $ with probability at least $1-\delta$. Similarly, for \eqref{eq:kappa}, we use the generalized definition of the compatibility factor $\kappa_T$, given as
\begin{align*}
\hlam{ (\D( \Bhat - \B))_T} & \leq  \lambda \sum_{\ell\in T}\|(\D( \Bhat - \B))_{\ell\cdot} \|_2  \\
&\leq \lambda \sqrt{|T|} \kappa_{T}^{-1} \| \Bhat - \B \|_\mathrm{F} ,
\end{align*}
which will lead to the claimed bound in the theorem.
\end{proof}

\subsection{Proof of Proposition \ref{prop:kappa}}\label{proof_prop_kappa}

\begin{proof}
By Cauchy-Schwartz inequality, we have
$$ \sum_{\ell\in T}\|(\Dk\B)_{\ell \cdot}\|_2 \le \sqrt{|T|}\|(\Dk\B)_T\|_{\mathrm{F}}, $$ 
and note that given two matrices $\bm{U}$ and $\bm{V}$, $(\bm{U}\bm{V})_T = (\bm{U})_T\bm{V}$ where $T$ is a subset of rows indices. We also use the fact that $\|\bm{U} \bm{V}\|_\mathrm{F} \leq \|\bm{U}\| \|\bm{V}\|_\mathrm{F}$. 
 We consider two cases:
\begin{itemize}[leftmargin=*]
\item For even $k$, we have
\begin{align}
&\|(\Dk\B)_T\|_\mathrm{F}= \|(\bm{\Delta})_T\bm{\Delta}^{(k)}\B\|_\mathrm{F} \nonumber\\
&\le \|(\bm{\Delta})_T\|\|\bm{\Delta}^{(k)}\B\|_\mathrm{F}  = \sqrt{\lambda_{\max}((\bm{\Delta})_T^\top(\bm{\Delta})_T))}\|\bm{\Delta}^{(k)}\B\|_\mathrm{F}. \nonumber
\end{align}
Note that $(\bm{\Delta})_T$ is equivalent to the incidence matrix of a subgraph with only $T$ edges, which allows us to bound,
\begin{align}
 \lambda_{\max}((\bm{\Delta})_T^\top(\bm{\Delta})_T))& \le  {\max_{(u,v)\in T} \{d_u + d_v\}}  \le  {2 d_{\max}}  \nonumber
\end{align}
where   $d_i$ is the degree of node $i$.
\item For odd $k$, we have
\begin{align}
&\|(\Dk\B)_T\|_\mathrm{F} = \|(\bm{\Delta}^\top)_T\bm{\Delta}^{(k)}\B\|_\mathrm{F}  \nonumber \\
&\le \|(\bm{\Delta}^\top)_T\|\|\bm{\Delta}^{(k)}\B\|_\mathrm{F} = \sqrt{\lambda_{\max}(\bm{\Delta}^{(2)}_{T \times T})}\|\bm{\Delta}^{(k)}\B \|_\mathrm{F} ,\nonumber
\end{align}
where $\bm{\Delta}^{(2)}_{T\times T}\in \R^{|T|\times |T|} $ is the principal submatrix of $\bm{\Delta}^{(2)}$ indexed by $T$. By Cauchy's interlacing theorem, the maximum eigenvalue of the submatrix is upper bounded, so
\begin{align}
& {\lambda_{\max}(\bm{\Delta}^{(2)}_{T \times T})}\le  {\lambda_{\max}(\bm{\Delta}^{(2)})} \le  {2 d_{\max}} .\nonumber
\end{align}
\end{itemize}
Therefore, for all $k$, $\|(\Dk\B)_T\|_\mathrm{F}\le \sqrt{2 d_{\max}} \|\bm{\Delta}^{(k)}\B \|_\mathrm{F}$. To conclude the proof,  note
\begin{align*}
&\sum_{\ell\in T}\|(\Dk\B)_{\ell \cdot}\|_2 \le \sqrt{|T|} \sqrt{2d_{\max}} \|\bm{\Delta}^{(k)}\B \|_\mathrm{F} \\
&\le \sqrt{|T|} \sqrt{2d_{\max}} \|\bm{\Delta}^{(k)} \| \| \B \|_\mathrm{F} \le (2d_{\max})^{\frac{k+1}{2}}  \sqrt{|T|} \|\B\|_\mathrm{F} .
\end{align*}
\end{proof}

\subsection{Proof of Theorem \ref{thm:conv}} \label{proof:convergence_admm}
We show the convergence of Alg.~\ref{alg:admmvec} by proving a modified version of \cite[Proposition 1]{ma_concave_2017}. The superscript $(m)$ denotes the values of $\B, \Z, \U$ at the $m$th iteration of the loop inside Alg.~\ref{alg:admmvec}.
\begin{myProposition}[Convergence to a feasible solution]
If $\tau \ge \mu$, then the primal residual $r^{(m)} = \|\Dk \B^{(m)} -\Z^{(m)} \|_{\mathrm{F}}$ and the dual residual $s^{(m+1)} = \|\tau (\Dk)^\top ( \Z^{(m+1)} - \Z^{(m)}) \|_{\mathrm{F}}$ of Alg.~\ref{alg:admmvec} satisfy that $\lim_{m \to \infty} r^{(m)} = 0$ and $\lim_{m \to \infty} s^{(m)} = 0$.
\end{myProposition}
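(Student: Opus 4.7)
The plan is to exhibit the augmented Lagrangian $\mathcal{L}$ in \eqref{eq:nc-gtf-lag-vec} as a Lyapunov function, showing that it decreases by a quantity quadratic in the iterate differences and is bounded below along the trajectory. These two facts together force $\|\B^{(m+1)} - \B^{(m)}\|_{\mathrm{F}}$, $\|\Z^{(m+1)} - \Z^{(m)}\|_{\mathrm{F}}$, and $\|\U^{(m+1)} - \U^{(m)}\|_{\mathrm{F}}$ to tend to zero. Once this is established, the conclusion follows: the dual residual is controlled by $s^{(m+1)} \le \tau\|\Dk\|\,\|\Z^{(m+1)} - \Z^{(m)}\|_{\mathrm{F}}$, and the primal residual equals $r^{(m+1)} = \|\U^{(m+1)} - \U^{(m)}\|_{\mathrm{F}}$ via the $\U$-update identity $\U^{(m+1)} = \U^{(m)} + \Dk\B^{(m+1)} - \Z^{(m+1)}$.

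To get the sufficient-decrease inequality, I split $\mathcal{L}^{(m+1)} - \mathcal{L}^{(m)}$ into three contributions matching the three block updates. The $\B$-subproblem minimizes a $1$-strongly convex quadratic (the data term $\tfrac{1}{2}\|\bm{Y} - \B\|_{\mathrm{F}}^2$ supplies strong convexity), so its minimizer drops $\mathcal{L}$ by at least $\tfrac{1}{2}\|\B^{(m+1)} - \B^{(m)}\|_{\mathrm{F}}^2$. The $\Z$-subproblem couples a $\tau$-strongly convex quadratic penalty with the $\mu$-weakly convex $h$ (Assumption~\ref{assumptions_1}(c)); under $\tau \ge \mu$ it is $(\tau - \mu)$-strongly convex in $\Z$, so this update drops $\mathcal{L}$ by at least $\tfrac{\tau - \mu}{2}\|\Z^{(m+1)} - \Z^{(m)}\|_{\mathrm{F}}^2$. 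A direct expansion of the scaled-form $\mathcal{L}$ together with $\U^{(m+1)} - \U^{(m)} = \Dk\B^{(m+1)} - \Z^{(m+1)}$ shows that the $\U$-update \textit{raises} $\mathcal{L}$ by exactly $\tau\|\U^{(m+1)} - \U^{(m)}\|_{\mathrm{F}}^2$.

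The crux is to absorb this $\U$-update surplus using the first two deficits, and this is the main obstacle. The lever is the first-order optimality of the $\Z$-subproblem, which combined with the $\U$-update yields the key inclusion $\tau\U^{(m+1)} \in \partial h(\Z^{(m+1)})$. Writing this at consecutive iterations and invoking the monotonicity of $\partial(h + \tfrac{\mu}{2}\|\cdot\|_{\mathrm{F}}^2)$ (i.e., the weak convexity of $h$) gives
\[
\tau\langle \U^{(m+1)} - \U^{(m)},\, \Z^{(m+1)} - \Z^{(m)}\rangle \ge -\mu\|\Z^{(m+1)} - \Z^{(m)}\|_{\mathrm{F}}^2 .
\]
Combined with the identity $\U^{(m+1)} - \U^{(m)} = \Dk\B^{(m+1)} - \Z^{(m+1)}$ and Cauchy--Schwarz, this lets me bound $\|\U^{(m+1)} - \U^{(m)}\|_{\mathrm{F}}$ by a linear combination of $\|\Z^{(m+1)} - \Z^{(m)}\|_{\mathrm{F}}$ and $\|\B^{(m+1)} - \B^{(m)}\|_{\mathrm{F}}$. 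If the bare coefficients are not favorable enough, I would work instead with an augmented Lyapunov function $\Phi^{(m)} = \mathcal{L}^{(m)} + \beta\|\Z^{(m)} - \Z^{(m-1)}\|_{\mathrm{F}}^2$ for a suitable $\beta > 0$ chosen so that the net coefficients of $\|\B^{(m+1)} - \B^{(m)}\|_{\mathrm{F}}^2$ and $\|\Z^{(m+1)} - \Z^{(m)}\|_{\mathrm{F}}^2$ in $\Phi^{(m+1)} - \Phi^{(m)}$ are strictly negative. The condition $\tau \ge \mu$ enters precisely here.

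To show $\mathcal{L}^{(m)}$ is bounded below, I reuse the inclusion $\tau\U^{(m+1)} \in \partial h(\Z^{(m+1)})$ together with the weak-convexity lower bound, evaluated at the two points $\Z^{(m+1)}$ and $\Dk\B^{(m+1)}$:
\[
h(\Z^{(m+1)}) + \tau\langle \U^{(m+1)},\, \Dk\B^{(m+1)} - \Z^{(m+1)}\rangle \ge h(\Dk\B^{(m+1)}) - \tfrac{\mu}{2}\|\Dk\B^{(m+1)} - \Z^{(m+1)}\|_{\mathrm{F}}^2 .
\]
Substituting into $\mathcal{L}^{(m+1)}$ yields
\[
\mathcal{L}^{(m+1)} \ge \tfrac{1}{2}\|\bm{Y} - \B^{(m+1)}\|_{\mathrm{F}}^2 + h(\Dk\B^{(m+1)}) + \tfrac{\tau - \mu}{2}\|\Dk\B^{(m+1)} - \Z^{(m+1)}\|_{\mathrm{F}}^2 \ge 0,
\]
where $h \ge 0$ by Assumption~\ref{assumptions_1}(a)--(b) and $\tau \ge \mu$. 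Telescoping the sufficient-decrease inequality against this lower bound forces $\sum_m \|\B^{(m+1)} - \B^{(m)}\|_{\mathrm{F}}^2$ and $\sum_m \|\Z^{(m+1)} - \Z^{(m)}\|_{\mathrm{F}}^2$ to be finite, hence both consecutive-iterate differences tend to zero; the bound from the previous paragraph then propagates this to $\|\U^{(m+1)} - \U^{(m)}\|_{\mathrm{F}} = r^{(m+1)} \to 0$, and $s^{(m+1)} \to 0$ follows immediately.
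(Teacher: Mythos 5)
Your overall strategy---a sufficient-decrease/Lyapunov analysis of the augmented Lagrangian in the style of the nonconvex-ADMM literature---is genuinely different from the paper's proof, which instead observes that for $\tau\ge\mu$ the Lagrangian \eqref{eq:nc-gtf-lag-vec} is convex in $\B$, in $\U$, and in each row of $\Z$ separately, invokes Tseng's block-coordinate convergence theorem to obtain convergence of the iterates to limit points $(\B^\star,\Z^\star,\U^\star)$, and only then reads off $s^{(m)}\to 0$ directly and $r^{(m)}\to 0$ by telescoping the $\U$-update (passing to the limit in $\U^{(m+t)}=\U^{(m)}+\sum_{i=1}^t(\Dk\B^{(m+i)}-\Z^{(m+i)})$ forces $\Dk\B^\star=\Z^\star$). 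Several of your ingredients are correct: the strong-convexity decreases for the $\B$- and $\Z$-subproblems, the exact increase $\tau\|\U^{(m+1)}-\U^{(m)}\|_{\mathrm{F}}^2$ from the dual step, the inclusion $\tau\U^{(m+1)}\in\partial h(\Z^{(m+1)})$, and the lower bound $\mathcal{L}^{(m+1)}\ge\tfrac12\|\bm{Y}-\B^{(m+1)}\|_{\mathrm{F}}^2+h(\Dk\B^{(m+1)})+\tfrac{\tau-\mu}{2}\|\Dk\B^{(m+1)}-\Z^{(m+1)}\|_{\mathrm{F}}^2\ge 0$.

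The gap is at the step you yourself flag as the crux. Monotonicity of $\partial\bigl(h+\tfrac{\mu}{2}\|\cdot\|_{\mathrm{F}}^2\bigr)$ yields only the one-sided inner-product inequality $\tau\langle \U^{(m+1)}-\U^{(m)},\Z^{(m+1)}-\Z^{(m)}\rangle\ge-\mu\|\Z^{(m+1)}-\Z^{(m)}\|_{\mathrm{F}}^2$; no use of Cauchy--Schwarz or of the identity $\U^{(m+1)}-\U^{(m)}=\Dk\B^{(m+1)}-\Z^{(m+1)}$ converts this into an \emph{upper} bound on $\|\U^{(m+1)}-\U^{(m)}\|_{\mathrm{F}}$ in terms of the $\B$- and $\Z$-iterate differences. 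Such a bound would require $\partial h$ to be single-valued and Lipschitz, i.e.\ $h$ Lipschitz-differentiable---precisely the standing assumption on the last-updated block in Wang--Yin--Zeng-type analyses---and this fails here: since $\lim_{t\to0^+}\rho'(t;\lambda,\gamma)=\lambda$, the subdifferential of $h$ at any $\Z$ with a zero row is a set of diameter $2\lambda$ in that row. Concretely, once the prox thresholds row $\ell$ to zero at two consecutive iterations, $\Z^{(m)}_{\ell\cdot}=\Z^{(m+1)}_{\ell\cdot}=0$ while $\U^{(m+1)}_{\ell\cdot}-\U^{(m)}_{\ell\cdot}=(\Dk\B^{(m+1)})_{\ell\cdot}$, which is a value of the graph difference, not an iterate difference, and is not controlled by $\|\Z^{(m+1)}-\Z^{(m)}\|_{\mathrm{F}}$ or $\|\B^{(m+1)}-\B^{(m)}\|_{\mathrm{F}}$. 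Without a bound on $\|\U^{(m+1)}-\U^{(m)}\|_{\mathrm{F}}$ the surplus $\tau\|\U^{(m+1)}-\U^{(m)}\|_{\mathrm{F}}^2$ cannot be absorbed; switching to an augmented Lyapunov function $\Phi^{(m)}=\mathcal{L}^{(m)}+\beta\|\Z^{(m)}-\Z^{(m-1)}\|_{\mathrm{F}}^2$ does not help, because the obstruction is not unfavorable constants but the absence of any bound at all. Consequently the telescoping argument, and with it the conclusion $r^{(m)}\to0$, does not go through as written; you would need a different mechanism for controlling the dual iterates, or the paper's route through joint convergence of $(\B^{(m)},\Z^{(m)},\U^{(m)})$.
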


\begin{proof}
Denote $\bm{D} = \Dk$, and $\rho_\lambda(\cdot) = \rho(\cdot;\lambda, \gamma)$.
Recall from Assumption~\ref{assumptions_1} (c) that there exists $\mu >0$ such that $\rho_\lambda(\|\x\|_2) + \frac{\mu}{2}\|\x\|_2^2$ is convex. Now consider the Lagrangian $\mathcal{L}(\B, \Z, \U)$ with regard to the $\ell$-th row $\z_\ell$ of $\Z =[\z_1^\top, ...\z_r^\top]^\top$, assuming all other variables are fixed:
\begin{align}
&\rho_\lambda(\|\z_\ell\|_2) + \frac{\tau}{2}\|\z_\ell-\bm{c}_1\|_2^2+c_2\nonumber \\ =&\rho_\lambda(\|\z_\ell\|_2) + \frac{\tau}{2}\|\z_\ell\|_2^2-\tau \langle \z_\ell,\bm{c}_1 \rangle +\frac{\tau}{2}\|\bm{c}_1\|_2^2+c_2 \nonumber
\end{align}
where $\bm{c}_1$ and $c_2$ represent terms of $\mathcal{L}(\B, \Z, \U)$ that do not depend on $\z_\ell$. With our choice of $\tau \ge \mu$, then $\mathcal{L}(\B, \Z, \U)$ is convex with regard to each of $\B$, $\U$, and for each row of $\Z$, allowing us to apply \cite[Theorem 5.1]{tseng_convergence_2001}. Therefore, Alg.~\ref{alg:admmvec} converges to limit points $\B^\star, \Z^{\star}, \U^{\star}$.  

Then it follows that the dual residual $\lim_{m \to \infty} s^{(m)} =\|\tau \D^\top ( \Z^{\star} - \Z^{\star})\|_{\mathrm{F}}=0$. For the primal residual, notice that the $\U$ update step in line 10 of Alg.~\ref{alg:admmvec} also shows that $\forall m, t \ge 0$,
\begin{equation*}
    \U^{(m+t)} = \U^{(m)} + \sum_{i=1}^t (\D \B^{(m+i)} -\Z^{(m+i)}) .
\end{equation*}
Fixing $t$ and setting $m \to \infty$, we have
$$ \U^{\star} = \U^{\star} + t(\D \B^{\star} -\Z^{\star}) $$
holds $\forall t\ge 0$. Hence, $\D \B^{\star} -\Z^{\star} = \bm{0}$, and therefore $\lim_{m\to \infty}r^{(m)} = \|\D \B^{\star} -\Z^{\star} \|_{\mathrm{F}}=0$. 
\end{proof}
This proposition shows that the algorithm in the limit achieves primal and dual feasibility, and that the Augmented Lagrangian in \eqref{eq:nc-gtf-lag-vec} with $\Z^{\star}$ and $\U^\star$ becomes the original GTF formulation in \eqref{eq:gtfvec}. $\B$ that is produced by every iteration of Alg.~\ref{alg:admmvec} is a stationary point of \eqref{eq:nc-gtf-lag-vec} with fixed $\Z$ and $\U$. As a result, $\B^\star$ is a stationary point of \eqref{eq:gtfvec}.

\begin{IEEEbiography}[{\includegraphics[width=1in,height=1.25in,clip,keepaspectratio]{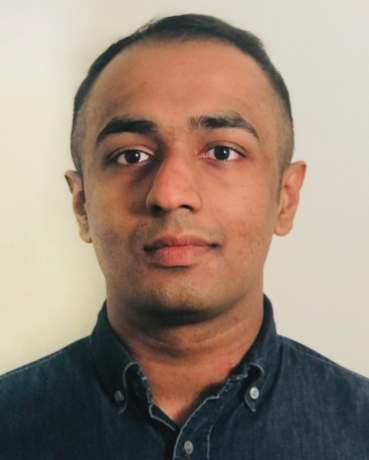}}]{Rohan Varma} is a Ph.D student in Electrical and Computer Engineering from Carnegie Mellon University and is advised by Jelena Kova\v{c}evi\'{c}. He received a  M.Sc in Electrical and Computer Engineering in 2016 from Carnegie Mellon University and a B.Sc in Electrical Engineering and Computer Science, B.A Economics and B.A Statistics in 2014 from U.C Berkeley. Varma was the recipient of the 2019 IEEE Signal Processing Society Young Author Best Paper Award. His research interests include graph signal processing, graph neural networks and more generally using graphs to augment machine learning tasks.
\end{IEEEbiography}

\begin{IEEEbiography}
[{\includegraphics[width=1in,height=1.25in,clip,keepaspectratio]{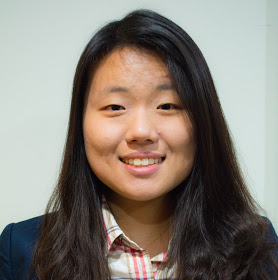}}]{Harlin Lee} is a Ph.D. candidate in Electrical and Computer Engineering at Carnegie Mellon University. She received her M.Eng. (with concentration in AI) and B.S. degrees in 2017 and 2016 respectively, both in Electrical Engineering and Computer Science from MIT. Her research interests include graph regularization, unsupervised learning, and clinical data analysis.
\end{IEEEbiography}

 \begin{IEEEbiography}
[{\includegraphics[width=1in,height=1.25in,clip,keepaspectratio]{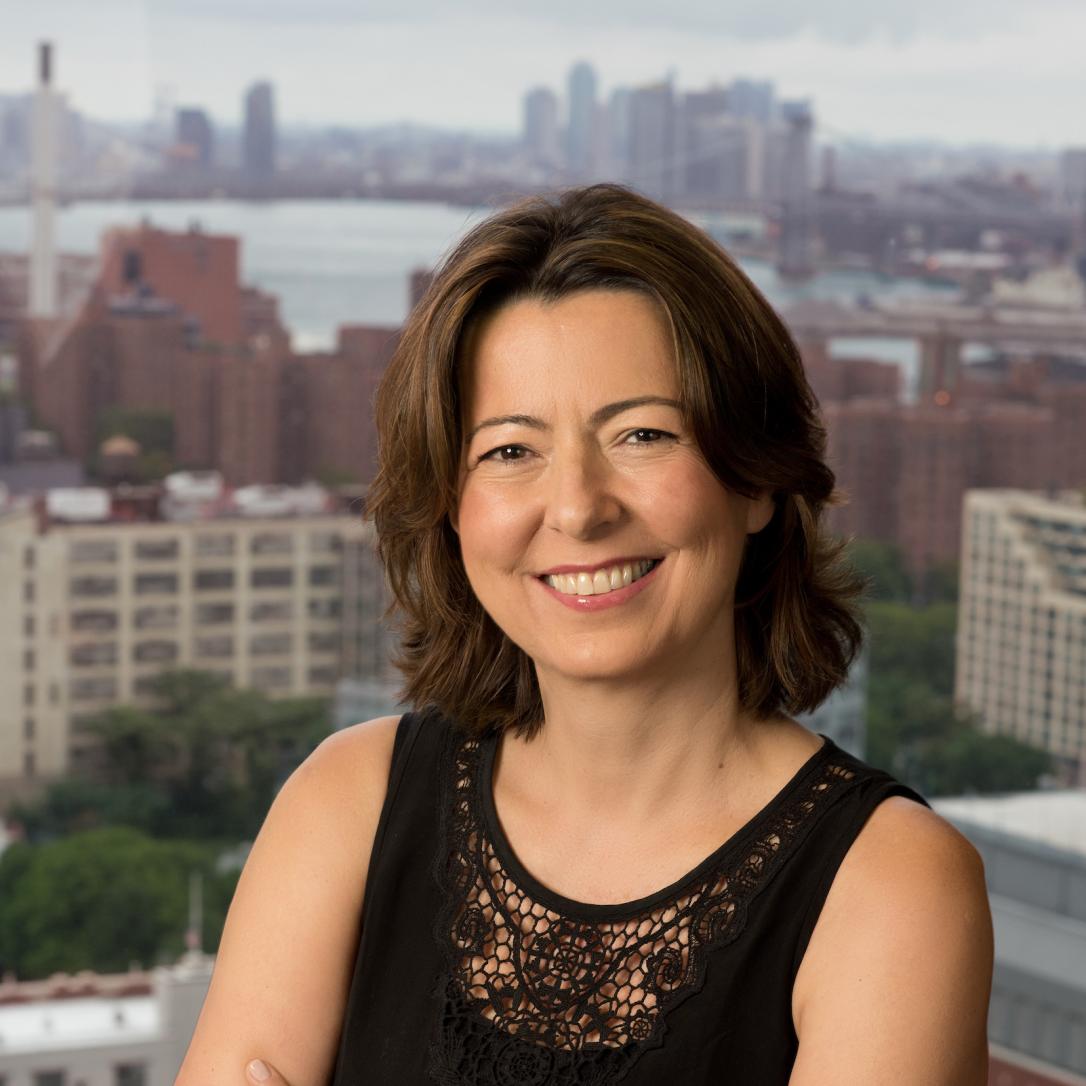}}]{Jelena Kova\v{c}evi\'{c}}
(S'88--M'91--SM'96--F'02) received the Dipl. Electr. Eng. degree from the EE Department, University of Belgrade, Yugoslavia, in 1986, and the M.S. and Ph.D. degrees from Columbia University, New York, in 1988 and 1991, respectively. From 1991--2002, she was with Bell Labs, Murray Hill, NJ. She was a co-founder and Technical VP of xWaveforms, based in New York City and an Adjunct Professor at Columbia University. In 2003, she joined Carnegie Mellon University, where she was Hamerschlag University Professor and Head of Electrical and Computer Engineering, Professor of Biomedical Engineering, and was the Director of the Center for Bioimage Informatics at Carnegie Mellon University. Since 2018, she is the William R. Berkley Professor and Dean of the Tandon School of Engineering at New York Univeristy, New York City. Her research interests include wavelets, frames, graphs, and applications to bioimaging and smart infrastructure. 

Dr.~Kova\v{c}evi\'{c} coauthored the books Wavelets and Subband Coding (Prentice Hall, 1995) and Foundations of Signal Processing (Cambridge University Press, 2014), a top-10 cited paper in the Journal of Applied and Computational Harmonic Analysis, and the paper for which A.~Mojsilovi\'{c} received the Young Author Best Paper Award. Her paper on multidimensional filter banks and wavelets was selected as one of the Fundamental Papers in Wavelet Theory. She received the Belgrade October Prize in 1986, the E.I. Jury Award at Columbia University in 1991, and the 2010 CIT Philip L. Dowd Fellowship Award from the College of Engineering at Carnegie Mellon University and the 2016 IEEE SPS Technical Achievement Award. She is a past Editor-in-Chief of the IEEE Transactions on Image Processing, served as a guest co-editor on a number of special issues and is/was on the editorial boards of several journals.  She was a regular member of the NIH Microscopic Imaging Study Section and served as a Member-at-Large of the IEEE Signal Processing Society Board of Governors.  She is a past Chair of the IEEE Signal Processing Society Bio Imaging and Signal Processing Technical Committee. 
\end{IEEEbiography}

\begin{IEEEbiography}
	[{\includegraphics[width=1in,height=1.25in,clip,keepaspectratio]{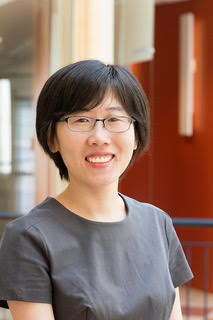}}]{Yuejie Chi}
(S'09-M'12-SM'17)
received the Ph.D. degree in Electrical Engineering from Princeton University in 2012, and the B.E. (Hon.) degree in Electrical Engineering from Tsinghua University, Beijing, China, in 2007. She was with The Ohio State University from 2012 to 2017. Since 2018, she is an Associate Professor with the department of Electrical and Computer Engineering at Carnegie Mellon University, where she holds the Robert E. Doherty Early Career Professorship. Her research interests include signal processing, statistical inference, machine learning, large-scale optimization, and their applications in data science, inverse problems, imaging, and sensing systems.

She is a recipient of the PECASE Award, NSF CAREER Award, AFOSR and ONR Young Investigator Program Awards, Ralph E. Powe Junior Faculty Enhancement Award, Google Faculty Research Award, IEEE Signal Processing Society Young Author Best Paper Award and the Best Paper Award at the IEEE International Conference on Acoustics, Speech, and Signal Processing (ICASSP). She has served as an Elected Member of the SPTM, SAM and MLSP Technical Committees of the IEEE Signal Processing Society. She currently serves as an Associate Editor of IEEE Trans. on Signal Processing.
\end{IEEEbiography}

\end{document}